\newtheorem{claim}{Claim}[]
\newcommand{\rfwa}[0]{RFWA}
\title{Robust and fair work allocation}
\author[1]{Amine Allouah}
\author[2]{Christian Kroer}
\author[1]{Xuan Zhang \thanks{Corresponding Author: Xuan Zhang, xuanzh@fb.com}}
\author[1]{Vashist Avadhanula}
\author[1]{Nona Bohanon}
\author[1]{Anil Dania}
\author[1]{Caner Gocmen}
\author[1]{Sergey Pupyrev}
\author[1]{Parikshit Shah}
\author[1]{Nicolas Stier}
\author[1]{\\Ken Rodr\'{i}guez Taarup}
\affil[1]{Meta}
\affil[2]{IEOR Department, Columbia University}
\date{}   
\begin{document}

\maketitle

\begin{abstract}
    In today's digital world, interaction with online platforms is ubiquitous, and thus content moderation is important for protecting users from content that do not comply with pre-established community guidelines. Given the vast volume of content generated online daily, having a robust content moderation system throughout every stage of planning is particularly important. We study the short-term planning problem of allocating human content reviewers to different harmful content categories. We use tools from fair division and study the application of competitive equilibrium and leximin allocation rules for addressing this problem. On top of the traditional Fisher market setup, we additionally incorporate novel aspects that are of practical importance. The first aspect is the forecasted workload of different content categories, which puts constraints on the allocation chosen by the planner. We show how a formulation that is inspired by the celebrated Eisenberg-Gale program allows us to find an allocation that not only satisfies the forecasted workload, but also fairly allocates the remaining working hours from the content reviewers among all content categories. The resulting allocation is also robust in a sense that the additional allocation provides a guardrail in cases where the actual workload deviates from the predicted workload. The second practical consideration is time dependent allocation that is motivated by the fact that partners need scheduling guidance for the reviewers across days to achieve efficiency.
    To address the time component, we introduce new extensions of the various fair allocation approaches for the single-time period setting, and we show that many properties extend in essence, albeit with some modifications. Lastly, related to the time component, we additionally investigate how to satisfy markets' desire for smooth allocation -- that is, partners for content reviewers prefer an allocation that does not vary much from time to time, so that the switch in staffing is minimized. We demonstrate the performance of our proposed approaches through real-world data obtained from Meta.
\end{abstract}

\section{Introduction}

Content moderation is an important challenge faced by many online platforms, and it typically involves the analysis of content generated on the platform to verify whether it is compliant with the platform's content policies.
Examples of problematic content categories include fake profiles, spam, hate speech, violent content, and harassment~\citep{facebook-cs}.
A key issue in this domain is the enormous amount of content that needs to be verified.
For example, Meta reported in Q3 of 2021: ``In Q3, the prevalence of hate speech on Facebook was 0.03\% or 3 views of hate speech per 10,000 views of content'' \citep{rosen2021}; and across Facebook and Instagram, actions were taken on over 100 million pieces of harmful content across all categories \citep{meta-csep}.
While artificial intelligence technology is necessarily used to handle this volume of content, it is not, at least currently, accurate enough to handle content moderation without human help.
For this reason, Meta employs thousands of people to help analyze ambiguous content that requires human verification or investigation~\citep{osofsky2019}.
Across the internet industry, social media companies, such as Twitter, YouTube, and Facebook, spend billions of dollars every year contracting with different partners that provide content moderating services~\citep{cnbc-cm}.
In this paper, we study the short-term planning problem of allocating reviewing capacity to different harmful content categories, while taking into account uncertainty about content volume as well as the skills (e.g., language requirement, training on community standards) of various partners supplying reviewer capacity.

We introduce the \emph{robust and fair work allocation} (RFWA) problem. 
In the RFWA problem we have a set of $n$ content categories, each with some predicted workload $d_i$, referred to as the \emph{demand} of category $i$.
We also have a set of $m$ partners for content reviewers, where each partner $j$ has some capacity $s_j$ for reviewing content; we use $s_j$ because we will see that this capacity is analogous to supply in the context of markets.
For every partner $j$, consider all and \emph{only} the content categories for which the partner has the appropriate language and cultural context as well as necessary training to work for. For each of these content category $i$, there is a value $v_{ij}$
which we will mostly think of as the rate of partner $j$ to handle category $i$ 
but it can also be used to model, for example, the accuracy of the work.
The goal of the planner is to produce an allocation of workload capacity of partners to categories, in a way that satisfies both every predicted demand and every capacity constraint.
In general, we will be interested in the setting where there is more work capacity than demand, in which case we would like to distribute excess work capacity across the content categories in a way that treats each category ``fairly,'' since each category has variable demand realizations, and have different stakeholders that care particularly about a given category.

To address the \rfwa\ problem, we propose the use of tools from fair division.
In particular, we study the application of leximin, competitive equilibrium from equal incomes (CEEI)~\citep{varian1974equity}, and more generally the max Nash welfare (MNW) allocation rule~\citep{caragiannis2019unreasonable}, for approaching this problem.
However, as we will discuss next, the \rfwa\ has many practical considerations that necessitate the development of extensions to these tools.

The first practical consideration of the \rfwa\ problem is the demands of the categories.
In our model, these demands are hard constraints that must be satisfied, and the ``utility'' achieved by each category is the supply of excess reviewing capacity that it receives beyond the forecasted demand. We refer to this excess allocation as \emph{over-allocation}, and the primary goal of the planner is to fairly over-allocate.
This requires us to consider an allocation model that is similar to the Nash bargaining model~\citep{nash1953two}, where each agent has a \emph{disagreement} point, which is analogous to the demands of categories in our model.
We introduce variations on market-equilibrium-based allocation (including CEEI) and leximin fairness for handling these demands. We then show that, analogous to results for the standard setting~\citep{halpern2020fair,aziz2020almost}, we have equivalence between leximin and CEEI when all categories have \emph{binary valuations}, i.e. partners can either handle a category or not; a second useful market-equilibrium approach where the budgets are not uniform leads to a different type of allocation, however.

A second practical consideration is that of \emph{time}. 
As discussed in prior work on other aspects of the content moderation problem, there is an important time component to this problem~\citep{nguyen2020clara,makhijani2021quest}.
The reason for this is that the time between the creation of a piece of content and its review can have a large impact on the platform quality. If harmful content is visible to users before it has been reviewed, then it negatively impacts users.
Conversely, if a piece of policy-compliant content is left in the review queue for a long time, it negatively impacts both the content creator and users.
To address this time component, we introduce new versions of the various fair allocation approaches that take into account this time component. 

Finally, related to the time component is the issue of smooth allocation across time.
In practice, the different partners prefer to have relatively smooth allocations from each category across time. This makes it easier for them to perform staff allocation, as well as staff training for dealing with each of the categories.
We introduce a new variation of the market equilibrium model that takes into account this smoothness component.

For each of the variations on the standard models introduced above, we study the properties resulting from these new variations, and show that in most cases we can still give desirable guarantees, for example by showing that market-equilibrium interpretations of the MNW allocation are preserved for several of these extensions.
We also perform extensive numerical tests on real-world content moderation data, and study the qualitative and quantitative results derived from applying each of the proposed models to the \rfwa\ problem.

While our paper is motivated by the \rfwa\ problem, the ideas apply more broadly to several related operational problems.
Here we mention two applications that are quite similar to the content moderation problem. The first one is the staffing problem for customer support services. The buyers here are different types of issues that need to be resolved, and each one has its own (expected) workload. On the other hand, the sellers are customer service representatives, and each representative can only work on a subset of the issues due to training, and has an upper limit on how much they can work in a day or in a week. The second application is how to schedule workers to provide labels for content that will later be used for the training of machine learning algorithms. Again, the buyers here are different types of content, and sellers are the workers. It is not hard to see that for both applications, over-allocation and smooth allocation arise as issues that must be addressed.

\noindent\textbf{Relationship to Market Equilibrium}
So far we have discussed the \rfwa\ problem in its concrete context. However, for the remainder of the paper, we will discuss our results in a more general market equilibrium context, in order to be more consistent with existing literature.
To that end, we now describe how the \rfwa\ problem maps to a market equilibrium problem.
The set of content categories corresponds to the set of buyers in a market (who now possess demands).
The set of partners corresponds to the set of items in the market, and capacities correspond to item supplies.
The value $v_{ij}$ that category $i$ has for partner $j$ maps directly to a valuation in the market.

\begin{itemize}
    \item Over allocation: In some cases, there is more capacity than the expected workload. In such cases, we want to allocate fairly the oversupply of qualified capacities to different eligible content categories.
    \item Smoothness: We want to allocate over multiple time periods, and we want an allocation that is humanly feasible, meaning we can not have an allocation that alternates a lot between consecutive periods (i.e., an allocation that is smooth over time).
\end{itemize}

\subsection{Related Literature}

Content moderation has gained its importance in the last decade as we spend more time browsing on the internet. Hence, many problems that arise from content moderation have been studied both inside and outside academia. The first is on developing reliable machine learning (ML) algorithms to automate content moderation as much as possible (see, e.g., \cite{jhaver2019human,gillespie2020content}). Given that ML alone cannot solve the content moderation problem \citep{cm-ai}, it is also important to efficiently use the human review capacity. \citet{haimovich2020scalable} developed a framework to predict the popularity of social network content in real-time, which could be used to detect harmful viral content and ultimately enable timely content moderation. \citet{nguyen2020clara} developed a statistical framework that increases the accuracy for detecting harmful content, by combining the decisions from multiple reviewers as well as those from ML algorithms. \citet{makhijani2021quest} built simulation models for the large-scale review systems to understand and optimize the human review process and guide the platforms for operational decisions (e.g., hiring additional reviewers). \citet{garcelon2021top} investigated a multi-arm bandit framework to calibrate the severity predictions of content based on various ML models.

To add to the plethora of works on different aspects of content moderation, in this work, we focus on the problem of scheduling reviewers in advance given the anticipated amount of workload, which we model as an allocation problem. The problem of allocating \emph{divisible} goods to buyers fairly was first solved by \citet{varian1974equity} via the so-called \emph{competitive equilibrium from equal incomes} (CEEI) solution. The CEEI solution is fair in a sense that it is envy-free \citep{foley1966resource}. Moreover, the CEEI solutions coincide with allocations that maximize the \emph{Nash social welfare} \citep{arrow2005handbook} and thus can be computed in polynomial time via the convex program given by \citet{eisenberg1959consensus}.

In terms of utility optimization, besides maximizing Nash welfare (MNW), another line of work investigates an egalitarian notion of utilities -- that is, to find allocations which are leximin-optimal. In the case of binary additive valuation, the set of MNW allocations and leximin-optimal allocations are known to coincide \citep{halpern2020fair,aziz2020almost}. Moreover, leximin-optimal solutions have been shown to satisfy several fairness properties for the allocation of indivisible goods (see, e.g., \cite{plaut2020almost,kurokawa2018leximin,freeman2019equitable}).

The over-allocation setting of our paper is closely related to the framework of Nash bargaining (see, e.g., \cite{panageas2021combinatorial}) with the hard demands being interpreted as the \emph{disagreement point}. A key difference between our model and the Nash bargaining model is that in the latter, there is the additional constraint that every buyer is allocated with exactly one unit of items. Interestingly, such fractional matching type of constraints allow the equivalence relation between the MNW allocations and leximin-optimal allocations to be extended to a more general setting of \emph{bi-valued} valuations \citep{aziz2020random}. However, we show in Section \ref{sec:demand-compare} that this result does not hold in our setting.


\section{Fisher Markets with Over allocation and Discrete Time}

We start by introducing the standard \emph{Fisher Market} setting. There is a set of $n$ \emph{buyers} (indexed by $i$), and $m$ \emph{items} (indexed by $j$). Each buyer has some \emph{budget} denoted by $B_i\in \R_{\ge 0}$, and each item has a certain \emph{supply} denoted by $s_j\in \R_{\ge 0}$. Moreover, every buyer $i$ has some utility function $u_i(x_i)$ denoting the utility that they derive from \emph{bundle} $x_i = (x_{i1}, x_{i2}, \cdots, x_{im}) \in \mathbb R_{\ge 0}^m$. We will mostly be interested in \emph{linear utilities}, where $$u_i(x_i) = \langle v_i, x_i \rangle \coloneqq \textstyle \sum_j v_{ij}x_{ij},$$ and $v_i = (v_{i1}, v_{i2}, \cdots, v_{im})\in \mathbb R_{\ge 0}^m$ is buyer $i$'s \emph{valuation}\footnote{If buyer $i$ is not eligible for item $j$ (i.e., in the application of content moderation, if a partner does not have the appropriate language and cultural context or necessary training for a content category), the valuation $v_{ij}$ will be zero.} of the items, although we will also consider generalizations of linear utilities a few times in the paper. We use $x \in \mathbb R_{\ge 0}^{n\times m}$ to denote an \emph{allocation} of items to buyers, with row $x_i$ denoting the bundle allocated to buyer $i$, and $x_{ij}$ denoting how much buyer $i$ gets of item $j$. The \emph{utility profile} of an allocation $x$ is the \emph{set} of utilities of all buyers. Note that in the for utility profile, the ordering of buyers does not matter. 
 
Given a set of \emph{prices} $p\in \mathbb R_{\ge 0}^m$, the \emph{demand set} of a buyer $i$ is the set of optimal bundles within budget: $$D_i(p) = \arg\max_{x_i \geq 0} \left\{ u_i(x_i) | \langle p,x_i \rangle \leq B_i \right\}.$$

A \emph{market equilibrium} is a price-allocation pair $(p,x)$ such that:
\begin{itemize}
    \item Every buyer $i$ receives a bundle from their demand set, i.e. $x_i \in D_i(p)$; and
    \item Any item $j$ with positive price $p_j>0$ must be exactly allocated, i.e. $\sum_i x_{ij} = s_j$.
\end{itemize}

In the standard setting, a market equilibrium is not only guaranteed to exist, it is also efficiently computable via convex programming. The celebrated \emph{Eisenberg-Gale} convex program \citep{eisenberg1959consensus} achieves this:
\begin{equation} \label{prob:eg}
\begin{array}{rrll|l}
    \displaystyle \max_{x\geq 0}\quad & \multicolumn{3}{l}{ \displaystyle\sum_i B_i\log u_i(x_i)} & \textrm{Dual variables}\\
    \textup{s.t.} \quad &\displaystyle  \sum_i x_{ij} &\leq  1, & \forall j=1,\ldots,m\quad & p_j
\end{array}
\tag{EG}
\end{equation}

The market equilibrium arises by combining the optimal primal solution with its corresponding dual variables on the supply constraints. The former gives the allocation that maximizes the \emph{budget-weighted geometric mean of buyers' utilities} (i.e., $[\prod_i u_i^{B_i}]^{\frac{1}{n}}$), and the dual variables act as prices. 

When every buyer has a budget of one (i.e., $B_i=1$), the objective of \eqref{prob:eg} is equivalent to maximizing the \emph{Nash Social Welfare (NSW)}, which is defined as the geometric mean of buyers' utilities. Moreover, market equilibrium is used to define one of the premier methods for fairly allocating divisible goods: \emph{competitive equilibrium from equal incomes} (CEEI).
The reason is that allocation via CEEI leads to several attractive fairness properties:
\begin{itemize}
    \item \emph{Pareto optimality}: there is no other allocation where every buyer is weakly better off (in terms of utilities) and some buyer is strictly better off.
    \item \emph{Envyfreeness}: every buyer $i$ prefers their own bundle to that of any other buyer $k$. If budgets are not equal, then this holds after scaling by budgets: $u_i(x_i) \geq \frac{B_i}{B_k}u_i(x_k)$.
    \item \emph{Proportionality}: every buyer $i$ prefers their own bundle to the one where they receive $\frac{B_i}{\sum_k B_k}$ of every item.
\end{itemize}

In this paper, we are interested in two variations on the standard Fisher market setting.

\paragraph{Allocation with hard demands from buyers.} First, each buyer $i$ has some \emph{required} (or \emph{hard}) utility demand $d_i$. We are required to find an allocation where every buyer $i$ gets utility at least $d_i$, and subject to feasibility, the goal is to \emph{fairly} allocate the excess utility, where the utility by which we measure fairness is $u_i(x_i) = \langle v_i, x_i \rangle - d_i$. We discuss and compare different fairness solution concepts in Section \ref{sec:demand}.

\paragraph{Allocation over multiple time periods.} Secondly, we are interested in settings where there is a temporal component: there are $T$ time steps, and items have supply constraints both per time step, and across all time steps. In particular, let $x_{ij}^t$ denote how much of item $j$ buyer $i$ receives at time $t$. Then, for every time step $t$, we must satisfy the \emph{per-time-step supply constraint} $\sum_i x_{ij}^t \leq s_j^t$, and additionally satisfy the \emph{overall supply constraint} $\sum_t \sum_i x_{ij}^t \leq s_j$. Note that that the temporal component only matters if $s_j < \sum_t s_j^t$, since otherwise we could split each item up into $T$ separate items each with supply specified by its per-time-step supply. We discuss different ways to model buyers' overall utility in Section \ref{sec:multi-time} and show how to satisfy markets' desire for smooth allocation overtime in Section \ref{sec:multi-time-smooth}.

\section{Fair allocation of oversupply} \label{sec:demand}
Here we assume that there's a \emph{hard} demand $d_i$ on utilities for each buyer $i$, and then they have additional \emph{soft} demand for getting additional supply. The goal is to find an allocation that satisfies every hard demand, and furthermore fairly allocate the oversupply to buyers as soft demands. To model fair allocation of oversupply, we define the utility of buyer $i$ given bundle $x_i$ as $$u_i(x_i)\coloneqq \sum_{j} v_{ij} x_{ij}-d_i.$$ 
In this section, we consider two solution concepts: market equilibrium and leximin rules. 

\subsection{Solution via Market Equilibrium}

For market equilibrium, we can apply the same idea as in \eqref{prob:eg} and give the formulation below, which maximizes the budget-weighted geometric mean of buyers' utilities. By scaling, we additionally assume without loss of generality that the supply $s_j=1$ for all sellers. 


\begin{equation}
\begin{array}{rrl|c}\label{prob:main}
  \displaystyle \max_{x\geq 0}\quad & \multicolumn{2}{l}{ \displaystyle \sum_i  B_i\log (\sum_{j} v_{ij} x_{ij} - d_i) } & \text{dual variables} \\[2mm]
  s.t.\quad &\displaystyle  \sum_i x_{ij}  \le  1, & \forall j=1,\ldots,m  & p_j
\end{array}
\tag{EG-demand}
\end{equation}

The objective in the formulation resembles that for Nash bargaining (see, e.g., \cite{panageas2021combinatorial}) with the hard demands being interpreted as the \emph{disagreement point}. However, our problem is fundamentally different from Nash bargaining, which enforces a fractional matching type of constraints and requires that every buyer is allocated to exactly one unit of goods. As a result, the solution spaces between our formulation and the Nash bargaining formulation are different, and results from the Nash bargaining setting do not immediately extend to our setting.

For the rest of the paper, we impose the following regularity assumption, which ensures that the maximum budget-weighted geometric mean of buyers' utilities is strictly positive, and the \eqref{prob:main} problem does not take logarithm of zeros. 

\begin{assumption} \label{ass:pos}
    There exists an allocation $x$ such that $u_i(x_i)>0$ for all buyer $i$. 
\end{assumption}

To simplify the notation for the rest of the section, we let $u_i\coloneqq \sum_{j} v_{ij} x_{ij}-d_i$ be the utility of agent $i$, and let $\lambda_{ij}$ be the dual variable corresponding to the non-negative constraints of \eqref{prob:main}. Then, the Lagrangian of problem \eqref{prob:main} is 
\begin{equation*}
   \sum_i  B_i\log \left( \sum_j v_{ij} x_{ij} - d_i \right) -  \sum_j p_j \left(\sum_i x_{ij} - 1\right) + \sum_{i,j} \lambda_{ij} x_{ij}.
\end{equation*}
By the stationarirty of the KKT conditions, we get that 
\begin{equation} \label{eq:demand-stationary-KKT}
    B_i \frac{  v_{ij}}{u_i} - p_j + \lambda_{ij} = 0 \quad \forall i=1,\dots,n; \, j=1,\dots,m.
\end{equation}

One property of the traditional fisher market (i.e., demand $d_i=0$ for all buyers) is that with price vector $p$, every buyer spends their entire budget. Here, we show a similar property, but with \emph{demand-inflated} budgets.

\begin{lemma} \label{lem:inflated-budget-demand}
    The market prices $(p_j)$  verify
    $$ B_i \left(1+ \frac{d_i}{u_i}\right) = \sum_j p_j x_{ij}$$
\end{lemma}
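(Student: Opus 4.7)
The plan is to prove the identity by manipulating the KKT stationarity condition \eqref{eq:demand-stationary-KKT} that has already been derived in the excerpt, combined with complementary slackness on the non-negativity constraints $x_{ij} \geq 0$.

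First, I would fix a buyer $i$ and take the stationarity equation $B_i v_{ij}/u_i - p_j + \lambda_{ij} = 0$, multiply through by $x_{ij}$, and sum over all items $j = 1, \dots, m$. This yields
\begin{equation*}
    \frac{B_i}{u_i}\sum_j v_{ij} x_{ij} \;-\; \sum_j p_j x_{ij} \;+\; \sum_j \lambda_{ij} x_{ij} \;=\; 0.
\end{equation*}
Next I would invoke complementary slackness for the non-negativity constraints of \eqref{prob:main}: for every $(i,j)$ we have $\lambda_{ij} x_{ij} = 0$, so the last summand vanishes.

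Then I would use the definition $u_i = \sum_j v_{ij} x_{ij} - d_i$, equivalently $\sum_j v_{ij} x_{ij} = u_i + d_i$, to rewrite the first term as $B_i (u_i + d_i)/u_i = B_i(1 + d_i/u_i)$. Rearranging gives exactly $B_i(1 + d_i/u_i) = \sum_j p_j x_{ij}$, as claimed.

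The only subtlety is ensuring that the division by $u_i$ is valid, i.e., $u_i > 0$ at the optimum. This is guaranteed by Assumption~\ref{ass:pos}: the objective $\sum_i B_i \log u_i$ would be $-\infty$ if any $u_i \leq 0$, and the assumption provides a feasible allocation with all $u_i > 0$, so any optimum must also satisfy $u_i > 0$. Beyond this well-definedness check, the argument is a direct computation from the KKT conditions, so I do not expect a substantial obstacle.
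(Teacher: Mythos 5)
Your proposal is correct and follows essentially the same route as the paper: multiply the stationarity condition by $x_{ij}$, sum over $j$, kill the multiplier term via complementary slackness (the paper phrases this as equality holding whenever $x_{ij}>0$, which is the same fact as $\lambda_{ij}x_{ij}=0$), and substitute $\sum_j v_{ij}x_{ij}=u_i+d_i$. Your extra remark that Assumption~\ref{ass:pos} guarantees $u_i>0$ at the optimum is a sound well-definedness check that the paper leaves implicit.
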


\begin{proof}
    We omit variables $\lambda_{ij}$'s  and rewrite the stationary condition \eqref{eq:demand-stationary-KKT} as:
    $$B_i \frac{  v_{ij}}{u_i} -   p_j  \ge 0 \quad \forall i=1,\dots,n; \, j=1,\dots,m,$$
    where equality holds whenever $x_{ij}>0$ due to complementary slackness. Multiplying both sides by $x_{ij}$ and summing over $j$, we get that
    $$B_i \frac{  \sum_j v_{ij} x_{ij}}{u_i} =   \sum_j p_j x_{ij}.$$
    Since $\sum_j v_{ij} x_{ij} = u_i + d_i$, we conclude the result. 
\end{proof}

Note that for when buyers do not have hard demands (i.e., $d_i=0$), the equality in Lemma~\ref{lem:inflated-budget-demand} recovers the property that every buyer spends all their budget: $B_i=\sum_j p_j x_{ij}$.

We next show the primal and dual solution pair $(x,p)$ to \eqref{prob:main} forms a market equilibrium under a slightly different definition of demand sets as a consequence of Lemma~\ref{lem:inflated-budget-demand}.

\begin{theorem} \label{thm:mkt-eq-demand}
    Let $x,p$ be a solution to \eqref{prob:main}. The pair $(x,p)$ is a market equilibrium where the demand sets are defined with respect to the demand-inflated budgets. That is, for each buyer $i$, 
    $$D_i(p) \coloneqq \arg \max_{x_i\ge 0} \left\{ \sum_{j} u_i(x_i) \coloneqq v_{ij} x_{ij} - d_i \, \bigg|\, \sum_{j} p_j x_{ij} \le B_i (1+\frac{d_i}{u_i(x_i)}) \right\}.$$
\end{theorem}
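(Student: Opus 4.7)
The plan is to verify the two equilibrium conditions directly from the KKT conditions of \eqref{prob:main}, with Lemma~\ref{lem:inflated-budget-demand} providing the key ingredient for feasibility.

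Market clearing is immediate: since $p_j$ is the dual variable on $\sum_i x_{ij} \leq 1$, complementary slackness gives $p_j(\sum_i x_{ij} - 1) = 0$, so any item with $p_j > 0$ is fully allocated. The substantive work is in showing $x_i \in D_i(p)$ for each buyer $i$. Feasibility is precisely the statement of Lemma~\ref{lem:inflated-budget-demand}: the equation $B_i(1 + d_i/u_i) = \sum_j p_j x_{ij}$ says that $x_i$ satisfies the inflated budget constraint (with equality), and $u_i > 0$ follows from Assumption~\ref{ass:pos} so the constraint is well-defined.

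For optimality, I would take any competing bundle $x_i' \geq 0$ with $u_i' := u_i(x_i') > 0$ that satisfies the inflated budget constraint $\sum_j p_j x_{ij}' \leq B_i(1 + d_i/u_i')$, and show $u_i' \leq u_i$. The trick is to linearize the implicit constraint: multiplying through by $u_i' > 0$ and using $u_i' + d_i = \sum_j v_{ij} x_{ij}'$ gives
\[ u_i' \sum_j p_j x_{ij}' \;\leq\; B_i(u_i' + d_i) \;=\; B_i \sum_j v_{ij} x_{ij}'. \]
On the other hand, the stationarity condition \eqref{eq:demand-stationary-KKT} yields $p_j \geq B_i v_{ij}/u_i$ for every $j$, so
\[ \sum_j p_j x_{ij}' \;\geq\; \frac{B_i}{u_i} \sum_j v_{ij} x_{ij}' \;=\; \frac{B_i}{u_i}(u_i' + d_i). \]
Chaining the two inequalities gives $(B_i u_i'/u_i)(u_i' + d_i) \leq B_i(u_i' + d_i)$; dividing by $B_i(u_i'+d_i) > 0$ (the case $u_i'+d_i = 0$ reduces to $u_i' = -d_i \leq 0 < u_i$) produces $u_i' \leq u_i$.

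The main conceptual obstacle is that the demand set here is defined by an implicit constraint: the right-hand side $B_i(1+d_i/u_i(x_i))$ depends on the utility of the very bundle being evaluated, so the usual "unroll budget $\Rightarrow$ bound utility" argument of the standard Fisher-market proof is not directly applicable. The resolution is the multiplication-by-$u_i'$ step, which converts the constraint into the linear form $u_i' \sum_j p_j x_{ij}' \leq B_i \sum_j v_{ij} x_{ij}'$; once in that form, the stationarity lower bound on $\sum_j p_j x_{ij}'$ closes the argument cleanly.
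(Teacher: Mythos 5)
Your proof is correct and takes essentially the same route as the paper's: complementary slackness on the supply constraints gives market clearing, Lemma~\ref{lem:inflated-budget-demand} gives feasibility under the inflated budget, and the stationarity bound $v_{ij}/p_j \le u_i/B_i$ from \eqref{eq:demand-stationary-KKT} drives optimality. The only difference is that you explicitly verify that no competing bundle satisfying the implicit, utility-dependent budget constraint can achieve higher utility (via the multiply-by-$u_i'$ linearization), a step the paper compresses into the assertion that ``the utility per dollar is maximized by the buyer's bundle.''
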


\begin{proof}
    The second condition in the definition of market equilibrium is satisfied due to complementary slackness. To show the first condition, which is every buyer $i$ receives a bundle from their demand set, we show that every item in the bundle $x_i$ maximizes the utility per dollar for buyer $i$. By dual feasibility of problem~\ref{prob:main}, we have $\lambda_{ij}\ge 0$. Hence, rearranging the terms of \eqref{eq:demand-stationary-KKT}, we have
    \begin{equation*}
        \frac{v_{ij}}{p_j} \le \frac{u_i}{B_i} \quad \forall i=1,\dots,n; \, j=1,\dots,m,
    \end{equation*}
    where equality holds when $x_{ij}>0$ due to the complementary slackness condition $x_{ij} \lambda_{ij} = 0$. This implies that the utility per dollar is maximized by the buyer's bundle, concluding the proof. 
\end{proof}

Before we continue to the next solution concept, we would like to point out that there are two interesting choices for buyers' budgets and both result in solutions with interesting properties. The first one is when all buyers have budgets of one, and in this case, the objective of \eqref{prob:main} is simply the Nash welfare. The second choice is when buyers' budgets are equal to their demands.

\begin{proposition} \label{prop:fair-add-or-multi}
    Assume $v_{ij}=1$ for all buyers $i$ and sellers $j$. Then,
    \begin{itemize}
        \item[(i)] when $B_i=1$ for all buyers $i$, the difference $\sum_j x_{ij}-d_i$ is constant for all buyers $i$;
        \item[(ii)] when $B_i=d_i$ for all buyers $i$, the ratio $\frac{\sum_j x_{ij}}{d_i}$ is constant for all buyers $i$.
    \end{itemize}
\end{proposition}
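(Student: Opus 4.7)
My strategy is to exploit the fact that with $v_{ij}=1$ for all $i,j$ every item is identical from every buyer's perspective, so at any market equilibrium all items must trade at a common price $p$. Once we have this, Lemma~\ref{lem:inflated-budget-demand} (or equivalently the stationarity condition \eqref{eq:demand-stationary-KKT}) will pin down $u_i$ as a function of $B_i$ and $p$, and the two claims will follow by substituting the two choices of budgets.

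\textbf{Step 1: every item is fully allocated with a positive common price.} I would first argue that at the optimum of \eqref{prob:main}, each item $j$ is fully supplied: because utilities are strictly increasing in $x_{ij}$ under $v_{ij}=1$, any unallocated mass could be given to some buyer with $u_i>0$ (guaranteed by Assumption~\ref{ass:pos}) and strictly increase the objective. Hence $\sum_i x_{ij}=1$ for every $j$, and by complementary slackness every $p_j$ is free to be positive.

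\textbf{Step 2: all prices coincide.} From \eqref{eq:demand-stationary-KKT} with $v_{ij}=1$, $p_j = B_i/u_i + \lambda_{ij} \ge B_i/u_i$, with equality whenever $x_{ij}>0$. Since every buyer must have $u_i>0$ (Assumption~\ref{ass:pos} together with the logarithm), every buyer is an active purchaser of at least one item. If buyer $i_1$ buys item $j_1$ and buyer $i_2$ buys item $j_2$, then $p_{j_1}=B_{i_1}/u_{i_1}\le p_{j_2}$ and symmetrically $p_{j_2}\le p_{j_1}$, giving $p_{j_1}=p_{j_2}$. Because every item is fully allocated by Step 1, this common value $p := B_i/u_i$ does not depend on the buyer either. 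In particular $u_i = B_i/p$ for all $i$.

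\textbf{Step 3: specialize.} With $u_i = \sum_j x_{ij} - d_i = B_i/p$:
\begin{itemize}
    \item If $B_i = 1$, then $\sum_j x_{ij} - d_i = 1/p$, constant in $i$, proving (i).
    \item If $B_i = d_i$, then $\sum_j x_{ij}/d_i = 1 + 1/p$, constant in $i$, proving (ii).
\end{itemize}

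\textbf{Main obstacle.} The one step that requires genuine care is Step 2: establishing that all items have the same equilibrium price. The reasoning is standard once one sees that homogeneous valuations force every buyer to demand only items at the minimum price, but we must simultaneously rule out unsold items (handled in Step 1 via Assumption~\ref{ass:pos}) to conclude that this minimum price is in fact the price of \emph{every} item. The remaining algebra is then a direct substitution.
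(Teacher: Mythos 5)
Your proof is correct and follows essentially the same route as the paper: establish that all items trade at a common price $\bar p$ and then use the KKT/budget relation to get $u_i = B_i/\bar p$, from which (i) and (ii) follow by substitution. The only differences are cosmetic — you spell out the common-price step (which the paper delegates to Theorem~\ref{thm:mkt-eq-demand}) directly from stationarity and complementary slackness, and you read $u_i = B_i/\bar p$ off \eqref{eq:demand-stationary-KKT} rather than via Lemma~\ref{lem:inflated-budget-demand}, but these are the same KKT facts.
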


\begin{proof}
    First note that the fact every buyer is assigned with a bundle in their demand set, as we have shown in Theorem~\ref{thm:mkt-eq-demand}, implies that all items have the same price, which we denote by $\bar p$. Let $\bar x_i\coloneqq \sum_{j} x_{ij}$ denote the amount of items allocated to buyer $i$. Then the property in Lemma~\ref{lem:inflated-budget-demand} becomes $B_i \frac{\bar x_i}{\bar x_i-d_i} = \bar p \bar x_i$ ($\ddagger$). For (i), we can simplify ($\ddagger$) as $\bar x_i-d_i = \frac{1}{\bar p}$ for all buyers $i$; and for (ii), we can simplify ($\ddagger$) as $\frac{\bar x_i}{d_i} = 1 +  \frac{1}{\bar p}$ for all buyers $i$. These two simplifications conclude the proof.
\end{proof}

For more general non-complete graphs, we anticipate similar results hold in a sense that the solution will try to be as fair as possible, additively when budgets are one and multiplicatively when budgets are equal to demands. Indeed, we observe this phenomenon in our experiments (see Section \ref{sec:app-single-time} for details).

\subsection{Solution via Leximin Fairness Concept}

In this section, we introduce another fairness solution concept, called the \emph{leximin-optimal} allocation.

Consider two vectors of equal length $v=(v_1, v_2, \cdots, v_n)$ and $\tilde v=(\tilde v_1, \tilde v_2, \cdots, \tilde v_n)$. Let $v^*$ denoted an ordered vector of $v$ so that $v^*_1 \le v^*_2 \le \cdots v^*_n$ (ties breaking randomly), and similarly let $\tilde v^*$ be an ordered vector of $\tilde v$. We say $v$ is \emph{leximin-greater} than $\tilde v$, denoted by $v \succ_{\textup{lex}} \tilde v$, if there exists $k\in [n]$ such that $v^*_i = \tilde v^*_i$ for all $i<k$, and $v^*_k > \tilde v^*_k$. 

For any allocation $x$, let $r(x)$ be the vector of budget-weighted utilities, where $r_i(x) = u_i(x_i)^{B_i}$. An allocation $x$ is said to be \emph{leximin-optimal} if there is no other allocation $\tilde x$ such that $r(\tilde x) \succ_{\textup{lex}} r(x)$. In other words, the leximin-optimal solution first maximizes the minimum component in the $r(x)$ vector, then among all allocations where the minimum component in the $r(x)$ vector is maximized, it maximizes the second minimum value, etc.

Leximin is a refinement of the \emph{max-min fairness} and a leximin-optimal allocation can be found by iteratively solving a sequence of max-min convex problems \citep{wilson1998fair}. See Algorithm~\ref{alg:leximin} for details.

\begin{algorithm}[ht] 
\caption{Leximin-optimal solution} \label{alg:leximin}
\begin{algorithmic}[1] 
    \STATE $F \gets [n]$ \COMMENT{set of free variables}
    \STATE Initialize the optimization problem 
    \begin{equation} \label{optProb:leximin}
    \begin{array}{rrl}
      \displaystyle \max_{x\geq 0}\quad & \multicolumn{2}{l}{ \displaystyle \min_{i\in F} r_i(x) }  \\[2mm]
      s.t.\quad & \sum_i x_{ij}  \le  1, & \forall j=1,\ldots,m 
    \end{array}
    \tag{leximin}
    \end{equation}
    \WHILE{$F\neq \emptyset$}
    \STATE Solve \eqref{optProb:leximin} and let $r^*$ be the optimal solution and let $i^*$ be a minimizer.
    \STATE Add to \eqref{optProb:leximin} constraints: $r_{i}(x)\ge r^*$ for all $i\in F$.
    \STATE Remove $i^*$ from $F$
    \ENDWHILE
    \RETURN the latest solution of \eqref{optProb:leximin} 
\end{algorithmic}
\end{algorithm}

When buyers have equal budget and \emph{binary additive utilities}, the leximin-optimal solution is known to maximize NSW, both when the items are indivisible \citep{aziz2020almost} and when the items are divisible \citep{halpern2020fair}. Therefore, we are interested in the connection between the leximin-optimal solutions and the solutions to \eqref{prob:main} (referred to as EG solutions thereafter) in our setting.

\subsection{Comparison} \label{sec:demand-compare}

We say two allocations are \emph{equivalent} if they give the same utility profile. In this section, we investigate whether the EG solutions and the leximin-optimal solution are equivalent under various scenarios (see Table~\ref{tab:summary-demand} for the summary). 

In terms of valuation, we first consider the special case of \emph{binary} valuations (i.e., $v_{ij}\in \{0,1\}, \forall i,j$), which  represents compatibility between buyers and sellers and is commonly used in practice. Before considering the case with general valuations, we consider a slightly more restricted generalization known as bi-valued valuations (i.e., $v_{ij}\in \{\alpha,\beta\}, \forall i,j$, for some $\alpha> \beta\ge 0$). The reason we consider the bi-valued setting is that in other domains such as Nash bargaining, many interesting results arise when valuations are bi-valued (see, e.g., \cite{aziz2020random}). However, as we show in the following, in our setting, the leximin-optimal solution and the EG solution do not coincide for bi-valued valuations. Hence, we no longer need to investigate the most general setting as these negative results directly apply.

Besides different cases of valuations, we also consider two cases for budgets. The first one is when all buyers have equal budgets, for which we assume wlog that every buyer has a budget of one; and the second is when buyers' budgets are equal to their demands. Note that for the first case, EG solutions coincide with the maximum Nash welfare (MNW) solutions, and the proof is similar to the one presented in \cite{halpern2020fair} for binary valuations in the classical setting without hard demand constraints.

\begin{table}[ht]
    \centering
    \begin{tabular}{c|l|l}
         & budget $=1$ & budget $=$ demand \\[1mm]
        \hline
        binary valuation & \ding{51} (Theorem \ref{thm:leximin-eg-equiv}) & \ding{55} (Example~\ref{eg:not-equiv-b-eq-d}) \\
        \hline 
        bi-valued valuation & \ding{55} (Example~\ref{eg:not-equiv-bi-val}) & \ding{55}
    \end{tabular}
    \caption{This table summarizes whether the EG solutions are equivalent to the leximin-optimal solutions under different cases. }
    \label{tab:summary-demand}
\end{table}

\begin{example} \label{eg:not-equiv-bi-val}
This example shows that the EG solution is not the same as the leximin-optimal solution even for bi-valued valuations. Assume we have two items, and two buyers with zero demand and unit budgets. All pairs have valuation one except for one pair: $v_{1,1}=2$ and $v_{1,2}=v_{2,1} =v_{2,2}=1$. Both the EG solution and the leximin-optimal solution allocate the entire second item to the second buyer. However, for the first item, the EG solution allocates the whole item to the first buyer, whereas the leximin-optimal solution allocates only $\frac{2}{3}$ units of it to the first buyer.
\end{example}

\begin{example} \label{eg:not-equiv-b-eq-d}
The following example shows that the EG solution is not the same as the leximin-optimal solution when budget equals to demand, even when the valuation is binary. Consider the instance with two buyers and two items. Each item has a supply of one. The buyers have demand $0.1$ and $0.2$ respectively. The first buyer only accepts the first item, while the second buyer accepts both items.

\begin{figure}[h!]
    \centering
    \begin{tikzpicture}
        \node[rectangle, draw] (1) at (0,1) {buyer 1};
        \node[rectangle, draw] (2) at (3,1) {item 1};
        \node[rectangle, draw] (3) at (0,0) {buyer 2};
        \node[rectangle, draw] (4) at (3,0) {item 2};
        \draw[thick] (1) -- (2) node [midway, fill=white] {1};
        \draw[thick] (3) -- (2) node [midway, fill=white] {1};
        \draw[thick] (3) -- (4) node [midway, fill=white] {1};
    \end{tikzpicture}
\end{figure}

Under the EG solution, the amount of the first item allocated to the first buyer is the solution for
$$\max_{a\in [0,1]} (a-0.1) (2-a-0.2)^2.$$
Therefore, the EG solution allocates $\frac{2}{3}$ units of the first item to the first buyer. However, under the leximin-optimal solution, the amount of the first item allocated to the first buyer is the solution for the equation 
$$a-0.1 = (2-a-0.2)^2,$$
with $a\in [0,1]$ due to the AM-GM inequality. The solution here is clearly not $\frac{2}{3}$.
\end{example}

\begin{theorem} \label{thm:leximin-eg-equiv}
    With binary valuations and unit budget, for \eqref{prob:main}, the set of leximin-optimal solutions is the same as the set of MNW solutions under Assumption~\ref{ass:pos}.
\end{theorem}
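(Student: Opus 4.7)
The plan is to show that every MNW solution has a common utility profile, and that this profile coincides with the (necessarily unique) leximin-optimal profile. Both the MNW solution set and the leximin-optimal solution set consist exactly of the feasible allocations realizing their respective profiles, so equality of the profiles gives equality of the sets. Uniqueness of the MNW profile is standard: strict concavity of $\log$ forces any two MNW allocations $x,x'$ to satisfy $u_i(x)=u_i(x')$ for all $i$, else their midpoint would strictly increase the NSW objective. Thus the real content is to show that some fixed MNW allocation $x^*$ is leximin-optimal.

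First I would extract a structural property of $x^*$ from the KKT stationarity condition \eqref{eq:demand-stationary-KKT}. With $B_i=1$ and binary $v_{ij}$, it reads $v_{ij}/u_i^* + \lambda_{ij} = p_j$ with $\lambda_{ij}\ge 0$ and $\lambda_{ij}x^*_{ij}=0$. Using $u_i^*>0$ (Assumption~\ref{ass:pos}) and complementary slackness, I would derive: (i) any item $j$ admitting some eligible buyer is fully sold, and (ii) such an item is distributed only to those eligible buyers whose utility equals $\min_{k:\, v_{kj}=1} u_k^*$. (The degenerate case $v_{ij}=0$ with $x^*_{ij}>0$ can only occur when $j$ has no eligible buyer, in which case the allocation may be zeroed out without changing any $u_i$.)

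From (ii) I would derive a ``bottom-set'' inequality. Fix $\tau\ge 0$, let $A=\{i:\, u_i^*\le\tau\}$ and $T_A=\{j:\, \exists i\in A \text{ with } v_{ij}=1\}$. Any $j\in T_A$ has minimum-eligible utility at most $\tau$, so by (ii) all its recipients at $x^*$ lie in $A$; together with (i) this gives $\sum_{i\in A}(u_i^*+d_i)=|T_A|$. For any feasible $y$, buyers in $A$ only extract value from items in $T_A$ (binary valuations), hence $\sum_{i\in A}(u_i^y+d_i)\le |T_A|$, yielding the key inequality $\sum_{i\in A} u_i^y \le \sum_{i\in A} u_i^*$.

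Finally, suppose for contradiction that some feasible $y$ leximin-dominates $x^*$. Re-index buyers so $u_1^*\le\cdots\le u_n^*$ and let $k$ be the smallest index with $u^y_{(k)}>u_k^*$ (so $u^y_{(i)}=u_i^*$ for $i<k$). Set $\tau=u_k^*$ and $m=|A|\ge k$; by sorting, $u_i^*=\tau$ for $k\le i\le m$, so
\[
\sum_{i=1}^m u^y_{(i)} \;\le\; \sum_{i\in A} u_i^y \;\le\; \sum_{i\in A} u_i^* \;=\; \sum_{i=1}^{k-1} u_i^* + (m-k+1)\tau.
\]
On the other hand, $u^y_{(i)}=u_i^*$ for $i<k$ and $u^y_{(i)}\ge u^y_{(k)}>\tau$ for $i\ge k$, giving
\[
\sum_{i=1}^m u^y_{(i)} \;>\; \sum_{i=1}^{k-1} u_i^* + (m-k+1)\tau,
\]
a contradiction. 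Hence $x^*$ is leximin-optimal and the two solution sets coincide. The principal obstacle is step one: pinning down the ``minimum-eligible-utility'' structural property from the KKT conditions cleanly enough to drive the bottom-set inequality; once that is in hand, the sorting and counting argument in the last paragraph is essentially combinatorial.
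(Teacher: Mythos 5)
Your proof is correct, but it takes a genuinely different route from the paper's. The paper does not prove this theorem directly: it derives it as a special case of the multi-period result (Theorem~\ref{thm:leximin-eg-equiv-time-sum}), whose proof is a Halpern--Shah-style exchange argument: it first notes that any MNW and any leximin-optimal allocation have the same total utility (total supply minus total demand), then, assuming some allocation leximin-dominates the MNW solution, it locates a ``poor'' buyer $i$ and a ``rich'' buyer $i'$ sharing an item $j$ with $x_{i'j}>0$, and transfers a bit of $j$ from $i'$ to $i$ to strictly increase Nash welfare --- contradicting MNW optimality. Your argument instead goes through the KKT/price structure of \eqref{prob:main}: every item with an eligible buyer has positive price, is fully sold, and is sold only to its minimum-utility eligible buyers, which yields the bottom-set inequality $\sum_{i\in A}u_i^y\le\sum_{i\in A}u_i^*$ for every threshold set $A$, and then a sorting/counting argument rules out leximin domination. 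This is a majorization-type argument that never invokes a reference leximin allocation, a swap, or the total-utility conservation claim, and it actually establishes the stronger fact that the MNW utility vector dominates every feasible utility vector on all bottom sets; you are also more explicit than the paper about converting ``some MNW allocation is leximin-optimal'' into equality of the two solution sets via uniqueness of the (symmetric) utility profiles. What the paper's combinatorial route buys is that it is stated and proved directly in the more general time-hierarchical setting of \eqref{prog:time_network} without touching duality; your KKT route would need to be redone there with the dual variables $\lambda_j^t+\lambda_j$, though it should go through. One small point to tighten if you write this up: when you invoke the ``degenerate case'' $v_{ij}=0$, $x^*_{ij}>0$, say explicitly that stationarity plus complementary slackness forces $p_j=0$, which is incompatible with $j$ having any eligible buyer --- that is the one-line justification your step (ii) leans on.
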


We defer the proof to Section \ref{sec:multi-time-sum}, where we introduce a more general setting where the same result holds. That is, Theorem \ref{thm:leximin-eg-equiv} may be viewed as a corollary to the more general result presented in Theorem \ref{thm:leximin-eg-equiv-time-sum}. 

\subsection{Application} \label{sec:app-single-time}

The content quality and integrity of online content platforms play a central role for users experience and platform growth. For that, companies spend billions of dollars every year to contract with different partners that provide human content reviews \citep{cnbc-cm}. In this and the following application sections, we demonstrate through real data acquired from Meta (formerly known as Facebook) how our proposed methods can be useful in allocating partners' capacity to different content categories.

In this setting, the sellers are the \emph{partners} that provide human review services and their supplies represent the amount of reviewer hours they can schedule. The buyers are the different categories of content that need to be reviewed (hereafter referred to as \emph{work types}), and their hard demands are the expected amount of human reviewing hours needed to review all the jobs (i.e., the \emph{forecasted workloads}). 

Since reviewers need to go through extensive training before being able to work on jobs in each work type, not all partners can work on all work types. As a result, we adopt the binary valuation structure to model compatibility between sellers and buyers. That is, if a partner can work on a certain work type, the pair has a valuation of one, and otherwise, it will have a value of zero.

\begin{figure}[ht]
    \centering
    \includegraphics[width=.6\textwidth]{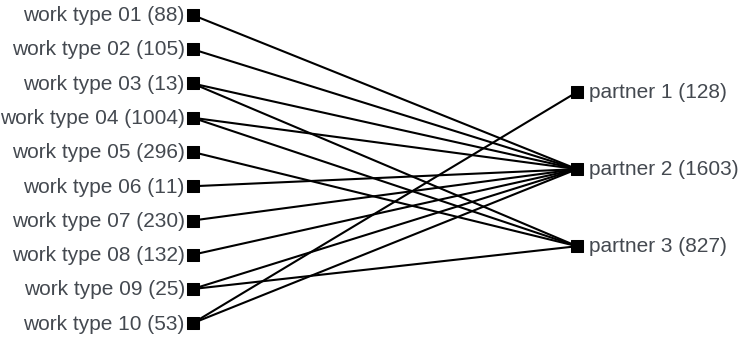}
    \caption{The graph displays buyers (reviewing work types) and sellers (reviewer partners) in a major market, where edges indicate whether a certain partner is able to work on a certain work type. The numbers in parentheses represent demands and supplies, and these numbers are generated through a certain transformation of the raw data. }
    \label{fig:market-arabic}
\end{figure}

For demonstration purposes, we will focus on one particular large-size market for the rest of the paper. See Figure~\ref{fig:market-arabic} for an example of buyers (and their demands), sellers (and their supplies), and the valuation structure for this market.

In the following, we delve into some specific reasons why a fair and robust allocation is important for this business problem. First, since the demands are forecasted workload, the actually realized workload will unavoidably be different from the demand. Thus, being able to schedule the partners to handle more jobs than the forecasted workload provides a guardrail against natural fluctuation in the amount of workload as well as fluctuations due to headline events. 

In addition, as the content review workforce is constantly evolving, occasionally, new work types are created and would have zero demands in the system. However, such zero demand is actually due to the fact that the machine learning models do not have adequate historical data to predict the workload yet. In this scenario, review hours should still be allocated to these new work types regardless of their predicted workload.

We investigate the allocations obtained from both the EG solutions and the leximin-optimal solutions. In addition, we consider two cases for buyers' budget: we first assume buyers have unit budgets, and then assume that buyers' budgets are equal to their demands. Note that in the former case, due to Theorem~\ref{thm:leximin-eg-equiv}, the EG solutions and the leximin-optimal solutions coincide, and thus we omit one in the presentation in Figure~\ref{fig:allocation-arabic-ww49}.

\begin{figure}[ht]
    \centering
    \begin{subfigure}[b]{0.44\textwidth}
        \centering
        \includegraphics[width=\textwidth]{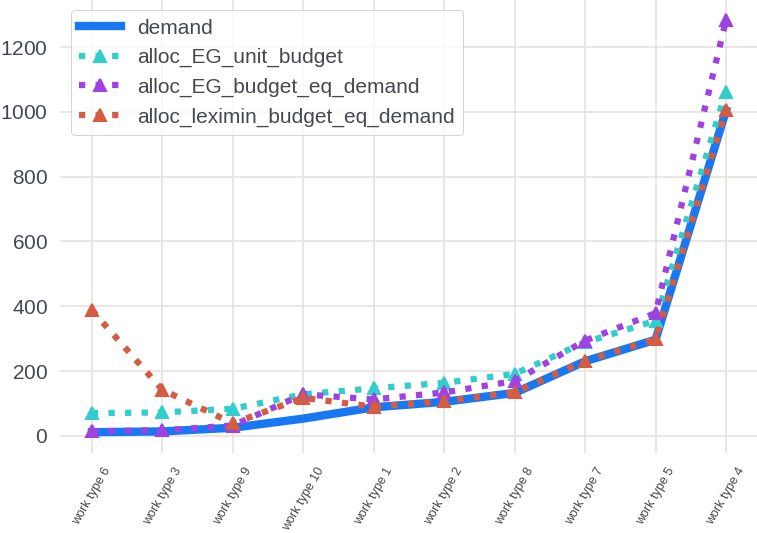}
        \caption{plot of demand and allocation} \label{fig:allocation-arabic-ww49-plot}
    \end{subfigure} %
    \hspace{.05\textwidth}
    \begin{subfigure}[b]{0.44\textwidth}
        \centering
        \includegraphics[width=\textwidth]{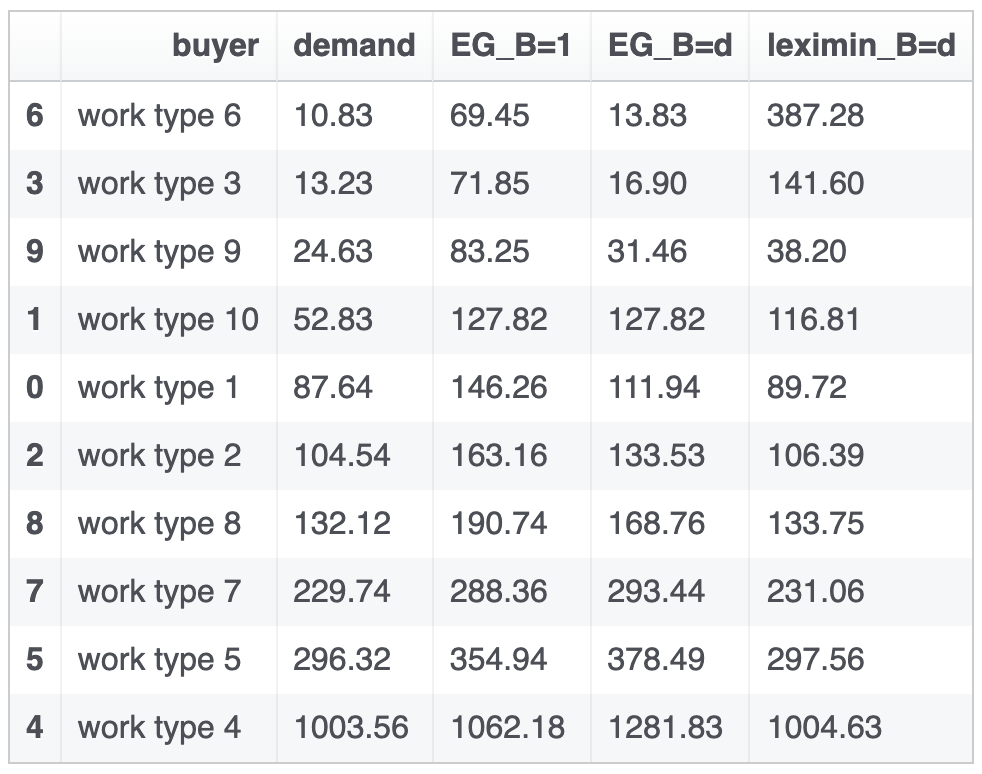}
        \caption{table of demand and allocation} \label{fig:allocation-arabic-ww49-table}
    \end{subfigure}
    \caption{Allocation under different solution concepts and budget setups. In the table columns, ``B=1'' stands for the case where all buyers have budget of one; and ``B=d'' stands for the case where budgets are equal to demands. The buyers are sorted by their demands in an increasing order.} \label{fig:allocation-arabic-ww49}
\end{figure}

We start by examining the EG solutions and we in particular observe that they satisfy properties similar to those stated in Proposition~\ref{prop:fair-add-or-multi}. When budgets are equal to ones, the oversupply from partners is equally divided among the work types. That is, the oversupply is distributed fairly in an \emph{additive} sense. More specifically, except for ``work type 10'' (which is the only one that is compatible with the ``partner 1''), all other work types are allocated an additional $60$ hours on top of their demands.

However, when buyers' budgets are set to be the same as their demand, we notice that the oversupply is allocated fairly in a more \emph{multiplicative fashion} under the EG solution. That is, work types with higher demand will receive higher amounts of allocation exceeding their demand. In our experiment, we observe that again except for the ```work type 10'', all other work types receive an allocation that is about $128\%$ of their demands. 

Lastly, the leximin-optimal solution when budgets are equal to demands does not perform well as it allocates a significant amount of oversupply to the buyer with the smallest amount of demand (or equivalently budget). The reason for this is that under leximin optimality objective, in order to maximize the minimum budget adjusted utility, the buyer with the smallest budget (or demand) will end up with the highest amount of utility (i.e., allocation). However, this solution, although is fair from an egalitarian point of view, does not align well with the business purposes here.

\section{Multiple Time Periods in a Single Program} \label{sec:multi-time}

In the robust content review problem, we would typically have constraints at several levels of the problem: a given review site may have an overall maximum capacity that they can allocate across the planning period, but then they also typically have tighter per-time-period staffing constraints.
To accommodate such constraints, we extend the basic Fisher market model to allow each item to have time-based constraints. More specifically, we will consider a model with $T$ discrete time steps, and each item $j$ will have an \emph{overall} supply $s_j$, as well as a {\em per-time period} supply $s_j^t$. 

The allocation vector now has a time component and we denote by $x^t_{ij}$ the amount of item $j$ allocated to buyer $i$ at time $t$. Similarly, the demand vector now also a time component, where $d_i^t$ is the demand of buyer $i$ at time $t$, and we denote $d_i\coloneqq \sum_t d_i^t$ as the total demand of buyer $i$.

Similar to the previous section, we may consider two solution concepts and investigate their equivalence under the special case of unit budget and binary valuation. In terms of buyers' overall utilities, we consider two possibility to combine their utilities at each time period, where we denote the utility of buyer $i$ at time $t$ as 
$$u_i^t(x_i^t) = \sum_{j} v_{ij} ( \sum_t x^t_{ij}) - d_i^t. $$

In the first approach, the utility of each buyer is simply the sum of the utilities at different times, whereas in the second approach,  the utility of each buyer is the geometric mean of the utilities at different times. In terms of fairness, the first approach is a direct extension of the previous section, aiming to be fair to all the buyers; and the second approach targets fairness at a more granular level, aiming to be fair to all the buyers across all the time periods.

\subsection{Total utility as the sum across time periods} \label{sec:multi-time-sum}

For one possibility, we assume the overall utility of buyer $i$ is
$$u_i(x_i) = \sum_{t} u_i^t(x_i^t)$$
Hence, we have the following program, which maximizes the budget-weighted utilities over the set of feasible allocations under time-based constraints. Note that this formulation also applies to the special case where each buyer only has an overall demand throughout all time periods (i.e., no time-specific demand).

\begin{equation} \label{prog:time_network}
\begin{array}{rrll|l}
  \displaystyle \max_{x\geq 0}\quad & \multicolumn{3}{l}{ \displaystyle\sum_i B_i\log (\sum_{j} v_{ij} (\sum_t x_{ij}^t) - d_i)} & \textrm{Dual variables}\\
  s.t.\quad &\displaystyle  \sum_i x^t_{ij} &\leq  s_j^t, & \forall j=1,\ldots,m, \forall t=1,\ldots,T \quad & \lambda^t_j \\
      \quad &\displaystyle  \sum_{t,i} x^t_{ij} &\leq  s_j, & \forall j=1,\ldots,m \quad & \lambda_j
\end{array}
\tag{EG-Time-sum}
\end{equation}

It follows from stationarity of the KKT conditions that the following must hold:
\begin{align}
  B_i \frac{v_{ij}}{u_i} \leq \lambda_j^t + \lambda_j, \quad \forall i,j,t
  \label{eq:hierarchichal eg stationarity}
\end{align}
Note that here, we omit the dual variables corresponding to the non-negative constraints and thus, the above must hold with equality if $x_{ij}^t > 0$ due to complementary slackness. 

\begin{lemma}
    Let $x,\lambda$ be a solution to \eqref{prog:time_network}. Consider the pair $(x,p)$, where the price $p_j^t = \lambda_j^t + \lambda_j$ on each item $j$ at a given time period $t$. Then, every buyer spends their entire demand-inflated budget. That is,
    $$\sum_{t,j}x_{ij}^t p_j^t = B_i (1+\frac{d_i}{u_i}), \quad \forall i=1,\dots,n$$
\end{lemma}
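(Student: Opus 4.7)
The plan is to mirror the proof of Lemma~\ref{lem:inflated-budget-demand} with the natural adaptation that the relevant KKT stationarity condition now couples the two dual variables $\lambda_j^t$ and $\lambda_j$ into the effective price $p_j^t = \lambda_j^t + \lambda_j$, and that summation is now over both $j$ and $t$.

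First I would start from the stationarity condition \eqref{eq:hierarchichal eg stationarity}, rewritten in terms of prices as
\[
B_i \frac{v_{ij}}{u_i} \le p_j^t, \qquad \forall i,j,t,
\]
and observe, via complementary slackness on the nonnegativity constraint $x_{ij}^t \ge 0$, that equality holds whenever $x_{ij}^t > 0$. Multiplying both sides by $x_{ij}^t$ and summing over all $j$ and $t$ yields
\[
B_i \frac{\sum_{t,j} v_{ij} x_{ij}^t}{u_i} = \sum_{t,j} p_j^t x_{ij}^t,
\]
since wherever the inequality is strict, the corresponding $x_{ij}^t$ is zero and contributes nothing.

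Next I would use the identity $\sum_{t,j} v_{ij} x_{ij}^t = \sum_j v_{ij}\bigl(\sum_t x_{ij}^t\bigr) = u_i + d_i$, which follows directly from the definition of $u_i$ in \eqref{prog:time_network}. Substituting gives
\[
B_i \, \frac{u_i + d_i}{u_i} = B_i\left(1 + \frac{d_i}{u_i}\right) = \sum_{t,j} p_j^t x_{ij}^t,
\]
which is precisely the desired equality.

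There is no genuine obstacle here; the main thing to be careful about is simply that the price of item $j$ at time $t$ is the \emph{sum} of both the per-time-period dual $\lambda_j^t$ and the overall dual $\lambda_j$, so that the stationarity inequality holds pointwise in $(i,j,t)$ and the complementary slackness step collapses the inequality to equality only on the support of $x_i$. Once that is set up, the argument is a direct generalization of Lemma~\ref{lem:inflated-budget-demand} from a single budget set to the time-indexed budget set.
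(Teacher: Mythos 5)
Your proposal is correct and follows essentially the same route as the paper's proof: multiply the stationarity condition \eqref{eq:hierarchichal eg stationarity} by $x_{ij}^t$, use complementary slackness to turn the inequality into an equality on the support of $x_i$, sum over $j$ and $t$, and apply the identity $\sum_{t,j} v_{ij} x_{ij}^t = u_i + d_i$. No gaps.
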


\begin{proof}
    Multiplying (\ref{eq:hierarchichal eg stationarity}) by $x_{ij}^t$ and summing over $t$ and $j$, we get that for all buyer $i$,
    $$\sum_{t,j}x_{ij}^t p_j^t  = \sum_{t,j}x_{ij}^t (\lambda_j^t + \lambda_j) = B_i \sum_{t,j}x_{ij}^t \frac{v_{ij}}{u_i} = B_i (\frac{u_i+d_i}{u_i}) = B_i (1+\frac{d_i}{u_i}).$$
\end{proof}

Thus, when buyers have no hard demand, the market clears if we set the price of item $j$ at time $t$ equal to $p_j^t = \lambda_j^t + \lambda_j$.

To show that the allocation-price pair $(x,p)$ is a market equilibrium, we need to show the  condition that an item has price $p_j^t$ strictly greater than $0$ only if $\sum_{i,t} x_{ij}^t = s_j$. This is not guaranteed in our setting. Instead, we have that $p_j^t > 0$ only if \emph{either} $\sum_i x_{ij}^t = s_j$ \emph{or} $\sum_{t,i} x_{ij}^t = s_j^t$. Thus, we get a market equilibrium if we generalize the price complementarity requirement to be that an item must have a nonzero price only if it is supply-constrained at some layer of its supply hierarchy. More formally, we have the following theorem:

\begin{theorem}
    Let $x,\lambda$ be a solution to \eqref{prog:time_network}. Consider the pair $(x,p)$, where the price $p_j^t = \lambda_j^t + \lambda_j$ on each item $j$ at a given time period $t$. This pair is a market equilibrium under the condition that an item has $p_j^t>0$ only if \emph{either} $\sum_i x_{ij}^t = s_j^t$ \emph{or} $\sum_{t,i} x_{ij}^t = s_j$, and under the following definition of demand sets:
    $$D_i(p) \coloneqq \arg \max_{x_i\ge 0} \bigg\{ u_i(x_i) \coloneqq \sum_t u_i^t(x_i^t) \, \bigg|\, \sum_{j,t} p_j^t x_{ij}^t \le B_i (1+\frac{d_i}{u_i(x_i)}) \bigg\}.$$
  \label{thm:eg-time}
\end{theorem}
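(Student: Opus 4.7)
The verification splits cleanly into (i) the generalized price-complementarity condition and (ii) the demand-set inclusion. For (i), note that $p_j^t = \lambda_j^t + \lambda_j > 0$ forces at least one of $\lambda_j^t > 0$ or $\lambda_j > 0$. Complementary slackness applied directly to the two supply constraints of \eqref{prog:time_network} then yields $\sum_i x_{ij}^t = s_j^t$ in the first case and $\sum_{t,i} x_{ij}^t = s_j$ in the second; so in either case the disjunctive tightness condition stated in the theorem holds. This part is essentially bookkeeping with the Lagrangian, so I would dispatch it first.

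The heart of the proof is (ii), showing that each $x_i$ maximizes $u_i(\cdot)$ within the demand-inflated budget set at prices $p_j^t = \lambda_j^t + \lambda_j$. The preceding lemma already gives that $x_i$ exactly exhausts this (self-referential) budget, namely $\sum_{j,t} p_j^t x_{ij}^t = B_i(1 + d_i/u_i(x_i))$. So it remains to show optimality. My plan is the standard bang-per-buck argument, adapted to the demand-inflated constraint. Pick any $\tilde x_i \geq 0$ satisfying $\sum_{j,t} p_j^t \tilde x_{ij}^t \leq B_i(1 + d_i/u_i(\tilde x_i))$; I want to conclude $u_i(\tilde x_i) \leq u_i(x_i)$. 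From the KKT stationarity \eqref{eq:hierarchichal eg stationarity}, $p_j^t \geq B_i v_{ij}/u_i(x_i)$ for every $(j,t)$. Multiplying by $\tilde x_{ij}^t \geq 0$ and summing gives
\begin{equation*}
\sum_{j,t} p_j^t \tilde x_{ij}^t \;\geq\; \frac{B_i}{u_i(x_i)} \sum_{j,t} v_{ij}\tilde x_{ij}^t \;=\; \frac{B_i\bigl(u_i(\tilde x_i) + d_i\bigr)}{u_i(x_i)}.
\end{equation*}
Chaining this with the budget constraint on $\tilde x_i$ produces $B_i(u_i(\tilde x_i)+d_i)/u_i(x_i) \leq B_i(u_i(\tilde x_i)+d_i)/u_i(\tilde x_i)$, and since $u_i(\tilde x_i) + d_i > 0$ (which is where I would invoke Assumption~\ref{ass:pos} or the observation that the budget constraint itself forces this positivity for nontrivial $\tilde x_i$), this rearranges to $u_i(\tilde x_i) \leq u_i(x_i)$, as desired.

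The main obstacle is that the budget set in the definition of $D_i(p)$ is self-referential: the right-hand side depends on $u_i(x_i)$. This is what makes the result nonstandard relative to the classical Fisher-market equilibrium verification. The bang-per-buck inequality absorbs this dependence cleanly because the summed expression $\sum_{j,t} v_{ij}\tilde x_{ij}^t = u_i(\tilde x_i) + d_i$ appears on both sides and cancels up to the utility factor. I would also briefly flag that edge cases where $u_i(\tilde x_i) \leq 0$ need to be excluded from contention (they trivially satisfy $u_i(\tilde x_i) \leq u_i(x_i)$ under Assumption~\ref{ass:pos}), so that dividing by $u_i(\tilde x_i)$ in the budget expression is well-defined. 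The remainder is routine and mirrors Theorem~\ref{thm:mkt-eq-demand}, with time-indexed prices in place of a single price vector.
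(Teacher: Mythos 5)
Your proof is correct and takes essentially the same approach as the paper: complementary slackness on the two supply constraints yields the generalized price-tightness condition, and the KKT stationarity inequality gives the bang-per-buck/demand-set argument. The paper states the optimality step more tersely ("buyers only buy maximal bang-per-buck items"), while you explicitly chain the inequality through the self-referential demand-inflated budget, which is a faithful elaboration of the same idea rather than a different route.
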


\begin{proof}
    Rearranging the terms of \eqref{eq:hierarchichal eg stationarity}, we have
    $$ \frac{v_{ij}}{p_j^t} \le \frac{u_i}{B_i}, \quad \forall i,j,t,$$
    where equality holds when $x_{ij}^t>0$ due to the complementary slackness condition on $x_{ij}^t$. Thus, we get that buyers only buy items that give maximal bang-per-buck. It follows that each buyer gets a bundle in their demand set.
    
    In addition, by complementary slackness, we have that $\lambda_j^t=0$ unless $\sum_i x^t_{ij} = s_j^t$, and similarly $\lambda_j=0$ unless $\sum_{t,i} x^t_{ij} = s_j$. It follows that $p_j^t = \lambda_j +\lambda_j^t>0$ only if \emph{either} $\sum_i x_{ij}^t = s_j^t$ \emph{or} $\sum_{t,i} x_{ij}^t = s_j$.
\end{proof}

A very related theorem is shown by \citet{jain2010eisenberg}: the EG program associated to any market which can be formulated as a network flow problem with a single source, and a sink for each buyer, where the utility of a buyer is the sum of mass on paths to the corresponding sink, and with the price on a path being equal to the sum of prices on the edges in the path, can be solved by maximizing the sum of the logs of these buyer utilities, and this yields a market equilibrium. \eqref{prog:time_network} fits in this framework, except that we have heterogeneous edge weights that go into the utility function.

Algorithm~\ref{alg:leximin} can be easily adjusted to obtain the leximin-optimal solution in this setting with time-based constraints. In the following, we investigate whether results in Section~\ref{sec:demand-compare} extend to the multiple-time setting. Since the multiple-time setting is a generalization, the negative results in Section~\ref{sec:demand-compare} immediately extend. Thus, we focus on the positive result and in particular we show that when all buyers have budgets of ones and when valuations are binary, the set of MNW solutions and the set of leximin-optimal solutions coincides.

\begin{theorem} \label{thm:leximin-eg-equiv-time-sum}
    With binary valuations and unit budget, for \eqref{prog:time_network}, the set of leximin-optimal solutions is the same as the set of MNW solutions under Assumption~\ref{ass:pos}.
\end{theorem}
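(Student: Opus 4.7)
The plan is to show that both the MNW and leximin criteria yield unique utility profiles and that these profiles coincide; set-level equivalence then follows. Uniqueness of the MNW profile is immediate: $\sum_i \log u_i(x)$ is the composition of a strictly concave function with the affine map $x\mapsto u(x)$ on the convex feasible polytope, so any two optima induce the same utility vector. Uniqueness of the leximin profile follows from Algorithm~\ref{alg:leximin}, whose each max--min subproblem is a concave maximization over a compact convex set and therefore has a unique optimal value.

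To prove that any MNW solution $x^*$ with profile $u$ is leximin-optimal, I would induct on the leximin level. Let $u^{*}=\min_i u_i$ and $S=\{i: u_i=u^{*}\}$. Using the market-equilibrium characterization from Theorem~\ref{thm:eg-time} with $B_i=1$ and binary valuations, KKT stationarity gives $p_j^t \ge 1/u_i$ whenever $v_{ij}=1$, with equality on the support of $x^*$. I would first verify that every pair $(j,t)\in R := \{(j',t'): v_{ij'}=1 \text{ for some } i\in S\}$ with $s_j^t>0$ must satisfy $p_j^t=1/u^{*}$: if instead $p_j^t>1/u^{*}$, then no buyer in $S$ would purchase $(j,t)$ (above their MBB) and no buyer $k\notin S$ would either (strictly below their MBB $u_k>u^{*}$), so $\sum_i x_{ij}^t=0$ under $x^*$, which together with complementary slackness on the two-layer prices $p_j^t=\lambda_j^t+\lambda_j$ produces a contradiction.

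With this structural fact, every $(j,t)\in R$ is fully consumed by $S$ under $x^*$, so the total accessible capacity of $R$ equals $\sum_{i\in S}(u_i+d_i) = |S|u^{*}+\sum_{i\in S}d_i$. For any feasible $y$ with profile $v$, since buyers in $S$ value only items in $R$, the supply bound yields $\sum_{i\in S}(v_i+d_i)\le |S|u^{*}+\sum_{i\in S}d_i$, hence $\sum_{i\in S}v_i\le |S|u^{*}$, which precludes $v_i>u^{*}$ for all $i\in S$. Therefore $\min_i v_i\le u^{*}$, and MNW attains the optimal leximin base level. For the induction step, I would restrict to buyers $N\setminus S$ and items in $R^c$: the restriction of $x^*$ to this residual market is itself MNW there, because any $\tilde y$ on the residual with higher $\prod_{i\in N\setminus S}u_i$ could be spliced with $x^*|_S$ (feasible since $R$ and $R^c$ are disjoint in items) to strictly improve the Nash welfare, contradicting MNW-optimality of $x^*$. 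Applying the base-case argument recursively shows MNW matches leximin at every level; combined with uniqueness of profiles, leximin-optimality $\Rightarrow$ MNW follows.

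The main obstacle is the two-layer price structure $p_j^t=\lambda_j^t+\lambda_j$ in the multi-time setting. In a single-time market, $p_j>1/u^{*}$ combined with ``nobody purchases'' immediately contradicts complementary slackness. With two layers, one must case-split: if $\lambda_j^t>0$, per-time binding forces $\sum_i x_{ij}^t=s_j^t>0$, immediately contradicting nobody purchasing; if only $\lambda_j>0$, overall supply of $j$ is binding. In this latter case, the ``no buyer purchases'' argument actually applies at every time step $t'$ with $s_j^{t'}>0$, because $(j,t')\in R$ inherits $p_j^{t'}\ge 1/u^{*}$ from the same $i\in S$; summing across $t'$ gives $\sum_{i,t'} x_{ij}^{t'}=0$ under $x^*$, contradicting the overall binding $\sum_{i,t'} x_{ij}^{t'}=s_j>0$. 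Once this subtlety is handled, the base-case argument transfers cleanly from the single-time Fisher market in Section~\ref{sec:demand} to the multi-time formulation \eqref{prog:time_network}, completing the proof.
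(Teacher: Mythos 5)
Your proposal is correct in substance, but it takes a genuinely different route from the paper. The paper's proof is a direct exchange argument: it compares an MNW allocation $x$ with a leximin-optimal allocation $\tilde x$, asserts (Claim~\ref{cl:util-sum-same-time-sum}) that both achieve the same total utility, locates a prefix $N$ of the sorted utility order on which $x$ has strictly smaller total utility than $\tilde x$, extracts buyers $i\in N$, $i'\notin N$ sharing an item $j$ with $x_{i'j}^t>0$, and transfers a small amount from $i'$ to $i$ to strictly increase the Nash product --- no prices appear at all. You instead exploit the KKT structure $p_j^t=\lambda_j^t+\lambda_j$ of \eqref{prog:time_network}: the minimum-utility group $S$ forces price $1/u^*$ on every positively-supplied (item, period) pair it values, which yields the group capacity bound $\sum_{i\in S}v_i\le |S|u^*$ for \emph{every} feasible allocation, pins the max--min level, and lets you peel off $S$ together with its items and recurse on the residual market; set equality then follows from uniqueness of the MNW and (sorted) leximin profiles. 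Your route buys an explicit structural decomposition of the market into price levels and handles the two-layer supply hierarchy head-on, at the cost of more machinery; the paper's argument is shorter but leans on the unproved equal-total-utility claim. Three points to tighten in your write-up: (i) in the case $\lambda_j^t=0<\lambda_j$, the inherited bound $p_j^{t'}\ge 1/u^*$ is not by itself enough to exclude purchases by $S$ at other periods (equality would permit them); what you need, and have, is $p_j^{t'}\ge\lambda_j=p_j^t>1/u^*$; (ii) the structural claim and the contradiction with ``$s_j>0$'' silently assume positive capacity, so items with $s_j=0$ or $\sum_t s_j^t=0$ should be discarded up front (they carry no capacity and do not affect the accounting); (iii) for the recursion to say something about full-market leximin rather than only about the residual market, note that your capacity bound also forces any allocation whose minimum utility is at least $u^*$ to give each $i\in S$ exactly $u^*$ and to devote the entire accessible capacity of $S$'s items to $S$, so the leximin comparison itself decomposes onto the residual buyers and residual items --- that is the one sentence of glue your sketch leaves implicit.
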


\begin{proof}
    Let $x$ be an allocation that is an EG solution (or equivalently an MNW solution since all buyers have budgets of one), we want to show that it is also leximin-optimal. Assume by contradiction that it is not. Let $\tilde x$ be a leximin-optimal solution. Let $r\coloneqq (u_i(x_i))_{i\in [n]}$ and $\tilde r\coloneqq (u_i(\tilde x_i))_{i\in [n]}$ denote the vectors of budget-weighted utilities corresponding allocation $x$ and $\tilde x$, receptively. First observe that by optimality we must have the following.
    
    \begin{claim} \label{cl:util-sum-same-time-sum}
        $\sum_{i} r_i(x_i) = \sum_i \tilde r_i(\tilde x_i) = \sum_j \min(s_j, \sum_t s_j^t) - \sum_i d_i$.
    \end{claim}
    
    Then, by Assumption \ref{ass:pos}, we can apply the same proof idea as given in \cite{halpern2020fair}. Let $o_1, o_2, \cdots, o_n$ be an enumeration of $[n]$ such that $r_{o_1}\le r_{o_2}\le \cdots \le r_{o_n}$, and similarly let $\tilde o_1, \tilde o_2, \cdots, \tilde o_n$ be an enumeration such that $\tilde r_{\tilde o_1}\le \tilde r_{\tilde o_2}\le \cdots \le \tilde r_{\tilde o_n}$. Since $\tilde r\succ_{\textup{lex}} r$ by assumption, we can let $k\in [n]$ be the smallest index so that $\tilde r_{\tilde o_k} > r_{o_k}$. Then by Claim \ref{cl:util-sum-same-time-sum}, we know there must be another index $i>k$ so that $\tilde r_{\tilde o_i} < r_{o_i}$ and we let $k'$ be the smallest such index. Let $N\coloneqq \{o_1, o_2, \cdots, o_{k'-1}\}$. Then, by definition of the enumeration $o$ and our choice of $k'$, we have $\sum_{i\in N} r_i < \sum_{i\in N} \tilde r_i$. Thus, there exists buyers $i\in N$, $i'\in [n]\setminus N$, and item $j\in [m]$ such that $v_{ij} = v_{i'j} = 1$ and $x_{i'j}^t>0$ for some $t\in [T]$. Also, note that we must have $r_i< r_{i'}$ by definition of $N$. Hence, by Assumption \ref{ass:pos}, one can re-allocate some item $j$ from buyer $i'$ to $i$ at time period $t$ without affecting feasibility. However, this re-allocation will increase the Nash Social Welfare, which contradicts the assumption that $x$ is an MNW solution. 
\end{proof}

\subsection{Total utility as the geometric mean across time periods} \label{sec:util-geo-mean}

For the second possibility, we assume the overall utility of buyer $i$ is
$$u_i(x_i) = \big( \prod_t u_i^t(x_i^t) \big)^{\frac{1}{T}}.$$
In this case, the logarithm of the budget-weighted geometric mean of all buyers' utility becomes
$$\log\left( \big(\prod_i [u_i(x_i)]^{B_i} \big)^{\frac{1}{n}}\right) = \frac{1}{T} \frac{1}{n} \sum_{i} B_i \sum_{t} \log(u_i^t(x_i^t))$$

To maximize the budget-weighted utilities over the set of feasible allocations under time-based constraints, we have the following program. 
\begin{equation} \label{prog:time_network_gm}
\begin{array}{rrll|l}
  \displaystyle \max_{x\geq 0}\quad & \multicolumn{3}{l}{ \displaystyle \sum_i B_i \cdot \frac{1}{T}\sum_t \log (\sum_{j} v_{ij}  x_{ij}^t - d_i^t)} & \textrm{Dual variables}\\
  s.t.\quad &\displaystyle  \sum_i x^t_{ij} &\leq  s_j^t, & \forall j\in [m], \forall t\in [T] \quad & \lambda^t_j \\
      \quad &\displaystyle  \sum_{t,i} x^t_{ij} &\leq  s_j, & \forall j\in[m] \quad & \lambda_j
\end{array}
\tag{EG-Time-geo-mean}
\end{equation}

In the setting, stationarity of the KKT conditions gives us the following relationship:
\begin{align}
    B_i \cdot \frac{1}{T} \cdot \frac{v_{ij}}{u_i^t} \leq \lambda_j^t + \lambda_j, \quad \forall i,j,t
  \label{eq:hierarchichal eg stationarity gm}
\end{align}
where equality holds whenever $x_{ij}^t > 0$ due to complementary slackness. 

\begin{lemma} \label{lem:budget-inflated-demand-time}
    Let $x,\lambda$ be a solution to \eqref{prog:time_network_gm}. Consider the pair $(x,p)$, where the price $p_j^t = \lambda_j^t + \lambda_j$ on each item $j$ at a given time period $t$. Then, every buyer spends their entire demand-inflated budget. That is,
    $$\sum_{t,j}x_{ij}^t p_j^t = B_i \left[ \frac{1}{T} \sum_t (1+\frac{d_i^t}{u_i^t}) \right], \quad \forall i=1,\dots,n$$
\end{lemma}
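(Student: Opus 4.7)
The plan is to mirror the argument used in the corresponding lemma for \eqref{prog:time_network} in Section \ref{sec:multi-time-sum}, adapted to reflect that the geometric-mean objective introduces a factor of $1/T$ and that the relevant per-period utility is $u_i^t(x_i^t)$ rather than the aggregated $u_i(x_i)$. Start from the KKT stationarity relation \eqref{eq:hierarchichal eg stationarity gm}, which (after invoking complementary slackness on the nonnegativity multipliers) holds with equality whenever $x_{ij}^t>0$:
\begin{equation*}
    B_i\cdot\frac{1}{T}\cdot\frac{v_{ij}}{u_i^t}\;=\;\lambda_j^t+\lambda_j\;=\;p_j^t,\qquad \text{whenever } x_{ij}^t>0.
\end{equation*}

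Next, multiply both sides by $x_{ij}^t$. The multiplication kills the indices for which $x_{ij}^t=0$, so the identity becomes valid as stated regardless of whether stationarity holds at equality on the inactive indices. Summing over $j$ while keeping $t$ fixed yields
\begin{equation*}
    \sum_j p_j^t\, x_{ij}^t \;=\; B_i\cdot\frac{1}{T}\cdot\frac{\sum_j v_{ij} x_{ij}^t}{u_i^t}.
\end{equation*}
By definition of $u_i^t(x_i^t)$, we have $\sum_j v_{ij} x_{ij}^t = u_i^t + d_i^t$, so the right-hand side simplifies to $B_i\cdot\frac{1}{T}\cdot(1+d_i^t/u_i^t)$. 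Finally, summing over $t\in[T]$ gives the claimed identity
\begin{equation*}
    \sum_{t,j} x_{ij}^t\, p_j^t \;=\; B_i\left[\frac{1}{T}\sum_t\left(1+\frac{d_i^t}{u_i^t}\right)\right].
\end{equation*}

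The computation is essentially mechanical — the only ingredients are stationarity, complementary slackness, and the definition of $u_i^t$. The one place to be careful is that the inflation factor now varies across time periods (because each $u_i^t$ enters separately), rather than being collapsed into a single $d_i/u_i$ term as in the sum case; this is a direct consequence of the logarithm being applied inside the sum over $t$ in the geometric-mean objective, and no new technical obstacle arises.
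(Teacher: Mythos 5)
Your proof is correct and follows essentially the same route as the paper's: multiply the stationarity condition \eqref{eq:hierarchichal eg stationarity gm} (which holds with equality on the support of $x_i$ by complementary slackness) by $x_{ij}^t$, use $\sum_j v_{ij}x_{ij}^t = u_i^t + d_i^t$, and sum over $j$ and $t$. The only cosmetic difference is that you sum over $j$ first and then over $t$, whereas the paper sums over both at once.
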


\begin{proof}
    Multiplying \eqref{eq:hierarchichal eg stationarity gm} by $x_{ij}^t$ and summing over $t$ and $j$, we get that for all buyer $i$,
    $$\sum_{t,j}x_{ij}^t p_j^t = B_i \cdot \frac{1}{T} \sum_{t,j}x_{ij}^t \frac{v_{ij}}{u_i^t} = B_i \cdot \frac{1}{T} \sum_t (\frac{u_i^t+d_i^t}{u_i^t}) = B_i \left[ \frac{1}{T} \sum_t (1+\frac{d_i^t}{u_i^t}) \right].$$
\end{proof}

Thus, when buyers have no hard demand at all time periods (i.e., $d_i^t=0$ for all $i, t$), the market clears if we set the price of item $j$ at time $t$ equal to $p_j^t = \lambda_j^t + \lambda_j$. 

To show that the allocation-price pair $(x,p)$ is a market equilibrium, not only do we need to modify the price complementarity requirement as in the previous subsection, we also need to modify the definition of demand sets due to Lemma \ref{lem:budget-inflated-demand-time}. 

\begin{theorem}
    Let $x,\lambda$ be a solution to \eqref{prog:time_network_gm}. Consider the pair $(x,p)$, where the price $p_j^t = \lambda_j^t + \lambda_j$ on each item $j$ at a given time period $t$. This pair is a market equilibrium under the condition that an item has $p_j^t>0$ only if \emph{either} $\sum_i x_{ij}^t = s_j^t$ \emph{or} $\sum_{t,i} x_{ij}^t = s_j$, and under the following definition of demand sets:
    $$D_i^t(p) = \arg \max_{x_i\ge 0} \bigg\{u_i(x_i) \coloneqq \bigg( \prod_t (u_i^t(x_i^t) ) \bigg)^{\frac{1}{T}} \bigg| \sum_{j,t} p_j^t x_{ij}^t \leq B_i \bigg[ \frac{1}{T} \sum_t (1+\frac{d_i^t}{u_i^t(x_i^t)}) \bigg] \bigg\}.$$
\end{theorem}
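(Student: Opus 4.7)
The proof strategy mirrors that of Theorem~\ref{thm:eg-time}, now adapted to the geometric-mean utility structure of \eqref{prog:time_network_gm}. There are two conditions to verify: the price-complementarity property that $p_j^t > 0$ only if one of the two supply constraints binds, and that each buyer's bundle $x_i$ lies in the modified demand set $D_i^t(p)$.

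Price complementarity follows directly from complementary slackness of the KKT conditions of \eqref{prog:time_network_gm}: $\lambda_j^t = 0$ whenever $\sum_i x_{ij}^t < s_j^t$, and $\lambda_j = 0$ whenever $\sum_{t,i} x_{ij}^t < s_j$. Thus $p_j^t = \lambda_j^t + \lambda_j > 0$ forces at least one of the two tightness conditions. For demand-set membership, the plan is to rearrange the stationarity condition \eqref{eq:hierarchichal eg stationarity gm} into a generalized bang-per-buck form $\frac{v_{ij}}{u_i^t p_j^t} \leq \frac{T}{B_i}$, with equality whenever $x_{ij}^t > 0$. This identifies a common upper bound $T/B_i$ on the (time-varying) bang-per-buck ratio that is attained exactly on the support of $x_i$. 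Combined with Lemma~\ref{lem:budget-inflated-demand-time}, which ensures that the demand-inflated budget constraint is tight at $x_i$, these two facts match the first-order optimality conditions of the optimization defining $D_i^t(p)$, paralleling the analogous steps in the proofs of Theorem~\ref{thm:mkt-eq-demand} and Theorem~\ref{thm:eg-time}.

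The main obstacle is carrying out the demand-set step rigorously: the budget constraint defining $D_i^t(p)$ is non-linear in $x_i$ because its right-hand side depends on $x_i$ through the per-period utilities $u_i^t(x_i^t)$, so the classical ``bang-per-buck implies budget-constrained optimum'' argument from linear duality does not apply verbatim. To close the argument, I would take any alternative bundle $x'$ satisfying the inflated-budget constraint, combine the KKT lower bound $\sum_{j,t} p_j^t x_{ij}'^t \geq \frac{B_i}{T}\sum_t \bigl(u_i^t(x_i'^t)+d_i^t\bigr)/u_i^t(x_i^t)$ with the budget inequality for $x'$ to obtain $\sum_t \frac{u_i^t(x_i'^t)+d_i^t}{u_i^t(x_i^t)} \leq \sum_t \frac{u_i^t(x_i'^t)+d_i^t}{u_i^t(x_i'^t)}$, and then apply an AM--GM--style algebraic manipulation, generalizing the single-period trick that underlies the proof of Theorem~\ref{thm:mkt-eq-demand}, in order to conclude $\prod_t u_i^t(x_i'^t) \leq \prod_t u_i^t(x_i^t)$. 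This last algebraic passage is the delicate part, and the positivity guarantee of Assumption~\ref{ass:pos} will be needed to avoid division-by-zero issues when manipulating the reciprocals $1/u_i^t$.
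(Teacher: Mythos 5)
Your first two steps coincide with the paper's proof: price complementarity is exactly the complementary-slackness argument on $\lambda_j^t$ and $\lambda_j$, and rearranging \eqref{eq:hierarchichal eg stationarity gm} into a bang-per-buck bound that is tight on the support of $x_i$ is also how the paper proceeds. The difference is what happens next: the paper argues \emph{locally}, computing the marginal utility $\partial u_i/\partial x_{ij}^t \propto \frac{u_i(x_i)}{u_i^t(x_i^t)}v_{ij}$ and concluding that purchased items maximize marginal bang-per-buck for the concave geometric-mean utility, then inherits the rest from Theorem~\ref{thm:eg-time}; it never attempts the global comparison against an arbitrary budget-feasible bundle that you outline. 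You correctly identify that such a global argument is needed because the budget constraint's right-hand side depends on the candidate bundle, but the closure you propose does not go through.

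Concretely, write $a_t := u_i^t(x_i'^t)$ and $b_t := u_i^t(x_i^t)$. The two facts you combine (the summed KKT bound and the inflated-budget inequality for $x'$) yield only
\begin{equation*}
\sum_t \frac{a_t + d_i^t}{b_t} \;\le\; \sum_t \frac{a_t + d_i^t}{a_t},
\end{equation*}
and this does \emph{not} imply $\prod_t a_t \le \prod_t b_t$ once the $d_i^t$ are positive: take $T=2$, $b_1=b_2=1$, $d_i^1=d_i^2=10$, $a_1=0.1$, $a_2=20$; the left-hand side is $40.1$ and the right-hand side is $102.5$, yet $a_1a_2=2>1=b_1b_2$. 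The AM--GM trick closes the argument only when $d_i^t=0$ for all $t$, where the inequality becomes $\sum_t a_t/b_t \le T$ and hence $\prod_t (a_t/b_t)\le 1$. Moreover, the obstruction is structural rather than a matter of sharper algebra: because the right-hand side of the budget constraint is evaluated at the candidate bundle, a bundle that drives one period's utility toward zero receives an arbitrarily large budget through the term $d_i^t/u_i^t(x_i'^t)$, which can finance a large utility in another period. For instance, with one buyer, one item, $T=2$, $s_j^1=s_j^2=1$, $v=1$, $B_i=1$, $d_i^1=d_i^2=10/11$ (Assumption~\ref{ass:pos} holds), the EG solution is $x^1=x^2=1$ with prices $p^t=11/2$, and the bundle $x'^1=14/11$, $x'^2=10.5/11$ meets the stated budget constraint with equality while attaining a strictly larger geometric-mean utility. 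So no argument using only those two aggregate inequalities can succeed; a correct treatment must either fix the budget right-hand side at the equilibrium per-period utilities (the quantities appearing in Lemma~\ref{lem:budget-inflated-demand-time}) or restrict to the $d_i^t=0$ case, and you should be aware that the literal self-referential demand-set definition is genuinely delicate here---the paper's local bang-per-buck argument sidesteps rather than resolves this issue.
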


\begin{proof}
    Note that the utility function is no longer linear but is concave since the geometric mean function is concave. To show that every buyer is allocated a bundle that is in their demand set, we instead want to show that $x_{ij}^t > 0$ only if it has the maximal \emph{marginal} bang-per-buck, where the marginal utility of item $j$ for buyer $i$ at time $t$ is
    $$\frac{\partial u_i(x_i)}{\partial x_{ij}^t} = \bigg(\prod_{t': t'\neq t} u_i^{t'} (x_i^{t'}) \bigg) v_{ij}^t = \frac{u_i(x_i)}{u_i^t(x_i^t)} v_{ij}^t.$$
    Rearranging the terms of \eqref{eq:hierarchichal eg stationarity gm}, we have
    $$\frac{v_{ij}}{p_j^t} \leq \frac{u_i^t}{B_i} \Rightarrow \frac{u_i}{u_i^t}v_{ij} / p_j^t \leq \frac{u_i}{B_i}, \quad \forall i,j,t,$$
    where equality holds when $x_{ij}^t>0$ due to the complementary slackness condition on $x_{ij}^t$. Thus, we get that buyers only buy items that give maximal marginal bang-per-buck, as desired. The proof for the second condition is the same as that in the proof of Theorem~\ref{thm:eg-time}.
\end{proof}

In the following, we define what makes a solution leximin-optimal in this setting. Instead of having a vector of buyers' budget-weighted utilities, we now have a vector of buyers' budget-weighted utilities at all time periods. More specifically, for any allocation $x$, the vector $r(x)$ now has a time component: $r_i^t(x) = u_i^t(x_i^t)^{B_i}$; and an allocation $x$ is \emph{leximin-optimal} if there is no other allocation $\tilde x$ such that $r(\tilde x) \succ_{\textup{lex}} r(x)$. With this definition, when all buyers have unit budgets, \eqref{prog:time_network_gm} is the same as \eqref{prob:main} if we treat each buyer at each time period as an individual buyer. Thus, Theorem~\ref{thm:leximin-eg-equiv} extends naturally.

\begin{corollary}
    With binary valuations and unit budget, for \eqref{prog:time_network_gm}, the set of leximin-optimal solution is the same as the set of MNW solutions under Assumption~\ref{ass:pos}.
\end{corollary}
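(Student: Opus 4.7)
The plan is to leverage the virtual-buyer reduction that the authors themselves hint at, and then invoke the machinery of the proof of Theorem \ref{thm:leximin-eg-equiv-time-sum}. Concretely, I would treat each pair $(i,t)$ as its own virtual buyer with unit budget and demand $d_i^t$, and each pair $(j,t)$ as its own virtual item with supply $s_j^t$; the valuation of virtual buyer $(i,t')$ for virtual item $(j,t)$ is $v_{ij}$ when $t=t'$ and $0$ otherwise. The overall supply constraints $\sum_{i,t} x_{ij}^t \le s_j$ then become coupling constraints across the virtual items that share the same underlying $j$. Under this identification, the objective of \eqref{prog:time_network_gm} with unit budgets agrees, up to the harmless factor $\tfrac{1}{T}$, with the log-Nash welfare over the $nT$ virtual buyers, and the leximin vector $(r_i^t)_{i,t}$ becomes exactly the vector of virtual-buyer utilities.

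The next step is to replicate the sum-invariant of Claim \ref{cl:util-sum-same-time-sum} in this setting: because valuations are binary, I would argue that $\sum_{i,t} u_i^t(x_i^t)$ equals $(\text{total useful items allocated}) - \sum_{i,t} d_i^t$, and any MNW or leximin optimum must exhaust the maximum feasible total allocation permitted jointly by the per-time and overall supply constraints. Assumption \ref{ass:pos} guarantees this maximum is strictly above $\sum_{i,t} d_i^t$, so all log terms remain finite. This gives the common sum needed to make the leximin swap argument go through.

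With the invariant in hand, I would then run the contradiction exactly as in Theorem \ref{thm:leximin-eg-equiv-time-sum}: if an MNW allocation $x$ is not leximin-optimal, fix a leximin-optimal $\tilde x$, pick the smallest index $k$ where the sorted $\tilde r$ exceeds the sorted $r$ and the smallest later index $k'$ where $\tilde r$ falls below $r$, let $N$ denote the set of the first $k'-1$ virtual buyers in the sort order of $r$, and note that $\sum_{(i,t)\in N} r_i^t(x) < \sum_{(i,t)\in N} r_i^t(\tilde x)$. This strict inequality, together with binary valuations, forces the existence of virtual buyers $(i,t)\in N$ and $(i',t'')\notin N$ and an underlying item $j$ with $v_{ij}=v_{i'j}=1$ and $x_{i'j}^{t''}>0$ such that shifting an $\varepsilon$ of item $j$ from $(i',t'')$ to $(i,t)$ strictly increases log-NSW, contradicting MNW optimality of $x$.

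The main obstacle, compared to the proof of Theorem \ref{thm:leximin-eg-equiv-time-sum}, is that virtual buyer $(i,t)$ only values items allocated at its own time $t$, so when $t\neq t''$ the swap is a re-timing rather than a reassignment at a single time. I would handle this by using divisibility and the overall supply layer: reducing $x_{i'j}^{t''}$ by $\varepsilon$ frees exactly $\varepsilon$ of the overall capacity $s_j$, which can then be spent as $x_{ij}^t$ as long as the per-time cap $s_j^t$ has slack; if that cap is tight, I would chain an auxiliary swap at time $t$ to make room, exploiting that a binding per-time constraint implies a positive allocation of item $j$ at time $t$ to some other virtual buyer whose utility can be lowered without destroying the leximin imbalance on $N$. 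Once this re-timing step is verified, the remainder of the argument is identical to the sum-time case, and the same swap also establishes the converse direction that any leximin-optimal allocation is MNW, yielding equality of the two solution sets.
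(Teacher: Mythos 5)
Your route is genuinely different from the paper's: the paper disposes of this corollary in one line, observing that with unit budgets \eqref{prog:time_network_gm} can be read as \eqref{prob:main} with each pair $(i,t)$ treated as an individual buyer, and then invoking Theorem~\ref{thm:leximin-eg-equiv}; it does not engage with the overall supply constraints at all. You correctly notice that the constraints $\sum_{i,t} x_{ij}^t \le s_j$ couple the virtual items $(j,1),\dots,(j,T)$, so the reduction to \eqref{prob:main} is not literally exact when $s_j < \sum_t s_j^t$, and you therefore re-run the swap argument of Theorem~\ref{thm:leximin-eg-equiv-time-sum} directly on the virtual-buyer instance. The scaffolding of that argument is fine: the analogue of Claim~\ref{cl:util-sum-same-time-sum} (both optima exhaust the maximum total allocatable utility, which is finite and exceeds $\sum_{i,t} d_i^t$ under Assumption~\ref{ass:pos}), the indices $k,k'$, the set $N$, and the conclusion that some item valued by a poor virtual buyer is partly held outside $N$; the converse inclusion is also cheap, since all leximin optima share a sorted utility profile and Nash welfare is symmetric, though it follows from that observation rather than from ``the same swap.''

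The gap is in the one genuinely new step, the re-timing. When the per-time cap $s_j^t$ at the recipient's time is saturated, saturation only gives you some virtual buyer $(i'',t)$ with $x_{i''j}^t>0$; your claim that its ``utility can be lowered without destroying the leximin imbalance on $N$'' is not justified and is false in general, since $(i'',t)$ may itself lie in $N$ and be poorer than the intended recipient $(i,t)$, in which case lowering it can decrease the Nash product and void the contradiction. Note that if $u_{i''}^t > u_i^t$ you do not need the re-timing at all (a within-time swap from $(i'',t)$ to $(i,t)$ already suffices), so the only case your chain must handle is exactly the case where the displaced buyer is not richer, and there a single auxiliary swap does not work: one needs a utility-preserving exchange chain, i.e.\ an augmenting path in the network source $\to j$ (capacity $s_j$) $\to (j,t)$ (capacity $s_j^t$) $\to (i,t)$ for $v_{ij}=1$, along which every intermediate virtual buyer gives up $\varepsilon$ of one item and receives $\varepsilon$ of another at its own time period, so that only the (richer, outside-$N$) endpoint loses and the (poorer, inside-$N$) endpoint gains; the existence of such a path follows from the fact that the coalition $N$ receives strictly less than its maximum feasible total under $x$, by max-flow/min-cut or the exchange property of the flow polytope (equivalently, by the polymatroid structure of the achievable utility vectors). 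Until the one-step ``auxiliary swap'' is replaced by an argument of this kind, the proof is incomplete precisely at the point where it goes beyond what the paper itself proves.
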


\subsection{Application}\label{sec:app-multi-time}

For application, we continue our investigation on the same market. Instead of a single time period, we now focus on four consecutive time periods. We observe from our data that the supply from our partners is constant across the four time periods. This is expected given that partners usually have a fixed amount of facilities to accommodate their employees. On the other hand, we observe that for some work types, their forecasted demands have either an upward or a downward trend from one time period to the next (see Figure~\ref{fig:demand-trend-arabic}).

\begin{figure}[ht]
    \centering
    \includegraphics[width=.7\textwidth]{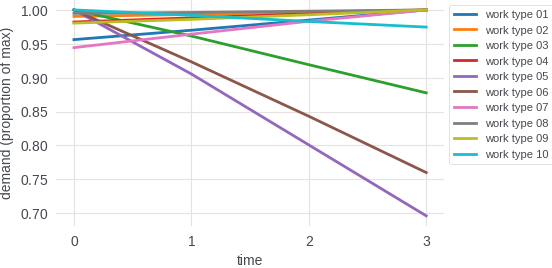}
    \caption{The trend of demand (i.e., forecasted workload) of different work types in the market from four consecutive time periods. Since the demands of different work types have different magnitude, we normalized them by displaying the proportion of the demand with respect to the maximum demand for each work type.}
    \label{fig:demand-trend-arabic}
\end{figure}

Due to the fact that variations are on the demand side, we use the second utility definition (as in Section~\ref{sec:util-geo-mean}) for our experiments. Moreover, for the following, we only focus on the case where buyers have a budget of one. The reason for this simplification is that the qualitative results are similar to those under the case where budgets are equal to demands.

The allocation results are shown in Figure~\ref{fig:allocation-trend-smooth-none}. Note that we omit ``partner 1'', since it can only work on the ``work type 10'', and thus intuitively all of its reviewing capacity should be allocated toward that work type.

As one can see from Figure~\ref{fig:allocation-trend-smooth-none}, for the most part, the allocation for each buyer-seller pair is rather smooth from time to time. However, for the ``work type 05'', its allocation from ``partner 3'' partner goes down, but it naturally follows from the fact that its demand has a downward trend. However, for ``work type 04'', although its demand is relatively constant, its allocation from both partners vary, with one having an upward trend and the other having a downward trend.

\begin{figure}[ht]
    \centering
    \includegraphics[width=.98\textwidth]{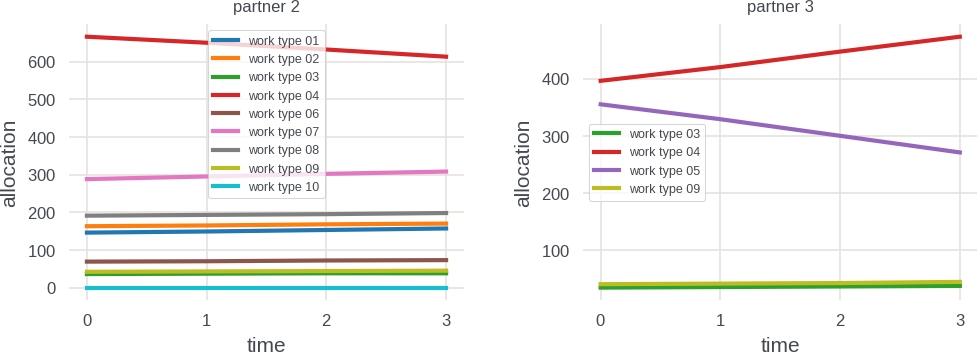}
    \caption{For each partner in the market, the plot shows the allocation to its compatible work types from all time periods. We omit ``partner 1'' since its allocation is immediate as it is only compatible to one work type.}
    \label{fig:allocation-trend-smooth-none}
\end{figure}

Because of the variability in allocation from one to the next, the solution here has the drawback that partners have to shift their staffing requirements from time to time, which could be costly from an operational point of view. As a result, it would be beneficial to the partners if we can obtain a solution that minimizes staffing change as much as possible. We discuss more in detail on markets' desire for smooth allocation in the next section.

\section{Smooth Allocation Across Multiple Discrete Time Periods} \label{sec:multi-time-smooth}

In the previous section, we saw how to allow each item to have per-time-period and overall supply. However, the resulting allocations may still exhibit an effect which is often considered undesirable in practice: the allocation may have large swings across time, even when the utility from this is low. Consider the following robust content review example:
\begin{example}
There are two review categories (buyers) and two review sites (items), with three time steps. Each review site has per-period supply of one, and no cross-time supply constraint. All valuations are one. There are many optimal allocations, but we will consider three in particular: 
\begin{itemize}
    \item a {\em high variation} allocation $x_{11} = (1,0,1), x_{12} = (0,1,0), x_{21} = (0,1,0), x_{22}= (1,0,1)$; 
    \item a {\em low variation} allocation $x_{11}' = (1,1,1), x_{22}' = (1,1,1)$ with all other $x'$ equal to zero; 
    \item another low-variation allocation where $\hat x_{ij} = (0.5, 0.5, 0.5)$ for all $i,j$.
\end{itemize}

In the high-variation allocation, each review site gets a non-smooth allocation: in time step $2$ and time step $3$, they flip their entire review category from one to the other. In practice, this is inefficient, due to staff context switching or even requiring different staff allocations. This can lead to less efficient reviewing for a given content category. In contrast, the two low-variation allocations give each review site a nice constant amount of work from each review category. We do not necessarily prefer $\hat x$ or $x'$ to each other; both are equally ``smooth''. 
\end{example}

From the example above, we can clearly see the problem with the market equilibrium solutions, which is that they do not take into account market's preferences for low-variation solutions.

To model the preference for low variation, one approach is to add constraints forcing the variation to be within a certain range. For instance, if we want the allocation from one time period to the next to be within $5\%$, we can add constraints
\begin{align*}
    x_{ij}^{t+1} \ge 0.95\cdot x_{ij}^t, &\quad \forall i\in [n], j\in [m], t\in [T-1] \\
    x_{ij}^{t+1} \le 1.05\cdot x_{ij}^t, &\quad \forall i\in [n], j\in [m], t\in [T-1].
\end{align*}
However, it is not hard to see that one issue with the constraints approach is that when demands are high (i.e., small amount of oversupply to distribute) and when variation in demand from time to time is also high, the problem could become infeasible.

Alternatively, we can include (or subtract, to be more precise) a penalty term in the objective. We assume the penalty term takes the following form:
$$\gamma \sum_{i,j,t} R(x_{ij}^{t+1}, x_{ij}^t),$$
where $R_{jt}$ is any convex function that measures the discrepancy between two quantities and $\gamma$ is a hyperparameter that captures the strength in terms of the market's desire for smooth allocation. Since $R_{jt}$ is the penalty function for non-smooth allocation, we further require that $R_{jt}(x_{ij}^{t+1},x_{ij}^{t}) = 0$ whenever $x_{ij}^{t+1}=x_{ij}^{t}$. Hence, we have the following program.

\begin{equation}\label{prog:smooth_time_network}
\begin{array}{rrll|l}
    \displaystyle \max_{x\geq 0}\quad & \multicolumn{3}{l}{ \displaystyle \sum_{i} B_i \sum_t \log (\sum_{j} x_{ij}^t v_{ij} - d_i^t) - \gamma \sum_{i,j,t} R(x_{ij}^{t+1}, x_{ij}^t) } \\
    \text{s.t.}\quad &\displaystyle  \sum_i x^t_{ij} &\leq  s_j^t, & \forall j=1,\ldots,m, \forall t=1,\ldots,T,\quad \\
    \quad &\displaystyle  \sum_{t,i} x^t_{ij} &\leq  s_j, & \forall j=1,\ldots,m,\quad 
\end{array}
\tag{EG-Smooth}
\end{equation}

Below, we list a couple of interesting choices for the penalty function.
\begin{itemize}
\item {\em Absolute deviation}: $R_{jt}(x_{i,j}^{t+1}, x_{i,j}^{t}) = | x_{i,j}^{t+1}- x_{i,j}^{t} |$. 
\item {\em Kullback-Leibler}: $R_{jt}(x_{i,j}^{t+1}, x_{i,j}^{t}) = \max(x_{i,j}^{t+1} \log \frac{x_{i,j}^{t+1}}{x_{i,j}^{t}}, x_{i,j}^{t} \log \frac{x_{i,j}^{t}}{x_{i,j}^{t+1}})$. 
\end{itemize}

\subsection{Application} \label{sec:app-multi-time-smooth}

To continue our experiment in Section~\ref{sec:app-multi-time}, we now include an additional penalty term in the objective. We show the results in Figure~\ref{fig:allocation-trend-smooth-abs-dev} using the absolute difference penalty function with appropriate smoothness parameters. We would like to point out the qualitative results using the KL divergence penalty function is similar and we omit them here. 

\begin{figure}[ht]
    \centering
    \includegraphics[width=.98\textwidth]{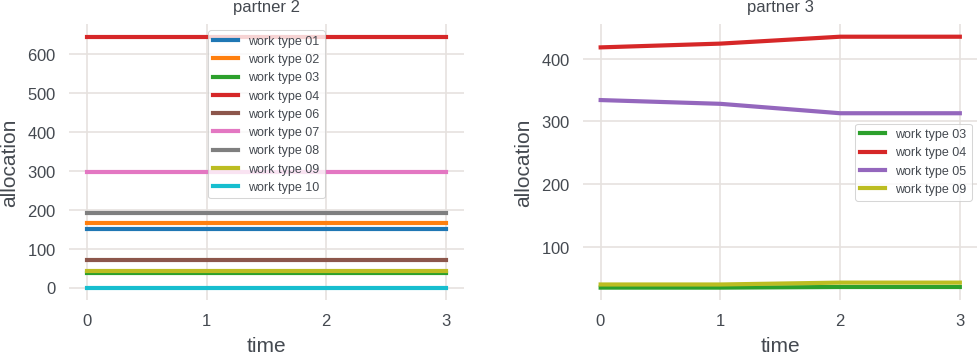}
    \caption{This is a similar plot to Figure~\ref{fig:allocation-trend-smooth-none}, but this figure plots the allocation with absolute difference penalty function in the objective when the smoothness parameter $\gamma$ is set to be $0.005$.}
    \label{fig:allocation-trend-smooth-abs-dev}
\end{figure}

Note that now for both ``work type 04'' and ``work type 05'', the allocations to ``partner 3'' are much smoother as compared those previously obtained as shown in Figure~\ref{fig:allocation-trend-smooth-none}.

As the smoothness parameter becomes larger, the dominating component in the objective of \eqref{prog:smooth_time_network} shifts from the utility term to the smoothness penalty term. Hence, if we were to set $\gamma = 0.01$, the resulting allocation becomes constant for each (work type, partner) pair. However, such a smooth allocation is at the expense of wasting partners' reviewing capacity. Thus, the smoothness parameter should be carefully chosen to balance different business objectives.

\medskip
\noindent\textbf{Acknowledgement.} The authors would like to thank Martin Bscheider and Andrew Martens for their support throughout the work.

\bibliographystyle{plainnat}
\bibliography{refs}


\end{document}